\newtheorem{theorem}{Theorem}
\newtheorem{lemma}{Lemma}
\newtheorem{definition}{Definition}
\newtheorem{remark}{Remark}
\newtheorem{assumption}{Assumption}
\begin{document}

\title{Singularity-Avoidance Prescribed Performance Attitude Tracking of Spacecraft}

\author{Jiakun Lei}
\affil{School of Aeronautics and Astronautics, Zhejiang University, Hangzhou 310027, China} 

\author{Tao Meng}
\affil{School of Aeronautics and Astronautics, Zhejiang University, Hangzhou 310027, China} 

\author{Weijia Wang}
\affil{School of Aeronautics and Astronautics, Zhejiang University, Hangzhou 310027, China}

\author{Heng Li}
\affil{School of Aeronautics and Astronautics, Zhejiang University, Hangzhou 310027, China}

\author{Zhonghe jin}
\affil{School of Aeronautics and Astronautics, Zhejiang University, Hangzhou 310027, China}




\authoraddress{Author's address: School of Aeronautics and Astronautics, Zhejiang University, Hangzhou 310027, China;E-mail:12124010@zju.edu.cn,Zhejiang Key Laboratory of Micro-nano satellite Research, Hangzhou 310027, China}


\markboth{Lei ET AL.}{SAPPC of Attitude Tracking}
\maketitle

\begin{abstract}	The attitude tracking problem with preassigned performance requirements has earned tremendous interest in recent years, and the Prescribed Performance Control (PPC) scheme is often adopted to tackle this problem. Nevertheless, traditional PPC schemes have inherent problems, which the solution still lacks, such as the singularity problem when the state constraint is violated and the potential over-control problem when the state trajectory approaches the constraint boundary.
	This paper proposes a Singularity-Avoidance Prescribed Performance Control scheme (SAPPC) to deal with these problems. A novel shear mapping-based error transformation is proposed to provide a globally non-singular error transformation procedure, while a time-varying constraint boundary is employed to exert appropriate constraint strength at different control stages, alleviating the potential instability caused by the over-control problem. Besides, a novel piece-wise reference performance function (RPF) is constructed to provide a relevant reference trajectory for the state responding signals, allowing precise control of the system's responding behavior. Based on the proposed SAPPC scheme, a backstepping controller is developed, with the predefined-time stability technique and the dynamic surface control technique employed to enhance the controller's robustness and performance. Finally, theoretical analysis and numerical simulation results are presented to validate the proposed control scheme's effectiveness and robustness.
\end{abstract}

\begin{IEEEkeywords}
	Attitude Tracking, Singularity-Avoidance Prescribed Performance Control, Predefined-ime Stability, Shear Mapping, Dynamic Surface Filter
\end{IEEEkeywords}

\section{INTRODUCTION}
I{\scshape n} 
 In recent years, contemporary space missions have appeared with higher complexity and task requirements, and some advanced performance requirements are required to be satisfied. Since the traditional controller is hard to achieve these given performance requirements prior, controllers with the guarantee of prescribed performance have recently been of high research interest. Driven by this requisition, this paper focuses on developing a relevant controller for the attitude tracking task, ensuring that all the preassigned performance requirements can be satisfied even when a big external disturbance exists. Further, this article's another key point is the accurate control of the system responding behavior, such as transient and steady-state behavior.

For the fundamental attitude control problem under disturbances, many efforts have been devoted to this issue, and a large number of techniques and control schemes are illustrated in the existing literature, such as the sliding mode control \cite{zhang_synchronization_2019,zhuang_robust_2021,lu_sliding_2012}, model predictive control \cite{mammarella_attitude_2019,chai_dual-loop_2022,golzari_quaternion_2020} etc, parameter adaptive strategy is often combined with these control schemes to provide an active disturbance compensation for the system\cite{thakur_adaptive_2015}.  However, the majority of these proposed controllers can only guarantee the asymptotical stabilization of the system, which implies that the system will cost infinite time to settle down theoretically.
Driven by the requisition for an explicit expression of the settling time, predefined-time stability is established by Sanchez-Torres in \cite{sanchez-torres_discontinuous_2014,sanchez-torres_predefined-time_2015}, providing a direct design of the upper boundary of the settling time. The predefined-time stability theory has been applied to various control schemes to provide a higher convergence rate for the system, especially in the space mission scenarios \cite{wang_continuous_2022, wang_attitude_2020, shi_satellite_2021}. 
However, although applying the predefined time stability theory enhances the controller's performance, the control effect is still unable to guarantee prior.

In order to achieve higher constraint ability on the state variable and the preassigned performance requirements, Bechlioulis and Rovithakis developed a unique methodology called Prescribed Performance Control(PPC). As they stated in \cite{bechlioulis_adaptive_2009,bechlioulis_robust_2008}, Due to these specific performance requirements, the original system is a constrained system. However, applying a homeomorphic error transformation function can transform the original constrained system into an equivalently unconstrained one. It has been strictly proved that if the convergence of the translated system can be satisfied, the original performance requirements will be able to achieve simultaneously.
Following the guidance of such an idea, the PPC scheme has been applied to spacecraft control problems to show its effectiveness, as listed in \cite{zhang_simple_2019,zhuang_fixed-time_2021,gao_finite-time_2021,zhang_prescribed_2019,shi_actor-critic-based_2021,huang_fault-tolerant_2020,luo_low-complexity_2018,xia_neuroadaptive_2021,wei_learning-based_2019,WANG2022}.

Although existing literature related to the PPC issue has proposed many effective schemes, there are still some problems that are worth noticing. Firstly the singularity problem. The traditional PPC scheme only works properly when the state trajectory stays in the constraint region. However,  although it is possible to technically guarantee that the state trajectory stays in the constraint region at $t = 0$, it cannot be ensured that the state trajectory will stay in the constraint region for the whole control process, especially when there exists big disturbance. Secondly, the potential instability problem caused by over-control. For the traditional PPC scheme, there exists a contradiction in the constraint strength and the control effect. If the constraint's strength is tight, the state trajectory will be affected by strong repulsion exerted by the constraint boundary, leading to the chattering of the system; if the constraint is relatively loose, the constraint will not be strong enough to restrain the state trajectory, and this will lead to a decreasing in the control effect. This problem limited the performance of the traditional PPC scheme, and the best trade-off between the stability and the performance is hard to evaluate. Thirdly, the existing PPC scheme provides a constraint region for the state trajectory. However, there is still a lack of an efficient way to precisely control the system's transient behavior. We found that the solution to these problems is still an open problem, which is worth further investigation.

In order to tackle these stated problems, a novel PPC scheme named singularity-avoidance prescribed performance control (SAPPC) is proposed in this paper. The proposed SAPPC scheme's main structure consists of three parts: a novel shear mapping-based error transformation function (abbr. SMETF), a time-varying state constraint, and a novel reference trajectory function (abbr. RPF). Firstly, by employing the shear mapping, the proposed SMETF realizes a globally non-singular error transformation, ensuring that the system will not trap into singularity when the state trajectory is out of the constraint region. This characteristic ensures that the state trajectory is able to converge back to the constraint region even when the system trajectory is pulled out of the constraint region by external disturbances.
Secondly, the time-varying constraint boundary is proposed to help alleviate the chattering caused by the over-control problem. The time-varied state constraint related to the RPF will provide appropriate constraint strength at different stages, alleviating the over-control problem. Further, the novel piece-wise RPF provides a relevant reference for the state responding, allowing the precise design of the system's behavior.

The main contribution of this paper is concluded as follows:

$1.$ The shear space mapping is first introduced to the conventional PPC scheme to provide a global non-singular error transformation procedure.

$2.$ A time-varying state constraint boundary is constructed to alleviate the potential instability problem at the steady-state caused by the over-control. The proposed state constraint boundary will change according to the reference function, providing a strong restrain ability at the convergence stage and a loose one at the steady-state. 

$3.$ A novel reference performance function is built to provide a precise design of the system settling time and steady-state error.

This paper is organized as follows: In Section \ref{preliminary}, notations in this paper and some useful lemmas are introduced for the following analysis. Problem formulation is accomplished in Section \ref{model} with the introduction of the system model. The detailed elaboration of the proposed SAPPC scheme is presented in Section \ref{solution}, and a relevant controller is built based on the proposed SAPPC scheme. Numerical simulation results, including regular case analysis, comparison, and Monte Carlo simulation, are presented in Section \ref{simulation}.

	\section{Preliminaries}
\label{preliminary}
\subsection{Notations}
For further analysis, the following notations are defined in this paper.
$\boldsymbol{I}_n$ denotes the $n \times n$ identity matrix, $\|\cdot\|$ denotes the Euclidean norm of a given vector or the induced norm of a given matrix. $\mathfrak{R}_i$ represents the Earth-Central-Inertial frame, while $\mathfrak{R}_b$ denotes the spacecraft body-fixed frame. 
Additionally,
for any vector $\boldsymbol{b}\in \mathbb{R}^{3}$, the operator $\boldsymbol{b}^{\times}$ denotes the $3 \times 3$ skew-symmetric matrix for cross manipulation, i.e. $\boldsymbol{b}^{\times}\boldsymbol{s} = \boldsymbol{b}\times\boldsymbol{s}$.
$\text{diag}\left(b_i\right)$ represents a diagonal matrix whose diagonal line is consisted by the components of the given vector $\boldsymbol{b}$, such that:
\begin{equation}
	\text{diag}\left(b_{i}\right) = 
	\begin{bmatrix}
		b_{1} & 0 & 0 \\
		0 & b_{2} & 0 \\
		0 & 0 & b_{3}
	\end{bmatrix}
\end{equation} 
where $b_{i}$ represents the $i$ th component of the given vector. Similarly, $\text{vec}\left(b_{i}\right)$ denotes a column vector whose components are $b_{1},b_{2},...b_{i}$ correspondingly, i.e. $\text{vec}\left(b_{i}\right) = \left[b_{1},b_{2},...b_{i}\right]^{\text{T}}$.

\subsection{Mathematical Lemma}
\begin{lemma}
	\label{lemma_tanh}
	For any $\mu > 0$ and $x\in\mathbb{R}$ , the following formula will be satisfied \cite{walls_globally_2005}:
	\begin{equation}
		0 \le |x| - x\tanh\left(\frac{x}{\mu}\right) \le 0.2785\mu 
	\end{equation}
\end{lemma}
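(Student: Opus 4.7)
The plan is to split the claim into its lower and upper bounds and treat each separately. For the lower bound $|x| - x\tanh(x/\mu) \ge 0$, I would first note that because $\tanh$ is odd, the map $x \mapsto x\tanh(x/\mu)$ is even and in fact equals $|x|\tanh(|x|/\mu)$. Since $\tanh(t) \le 1$ for all $t \ge 0$, it follows that $|x| - x\tanh(x/\mu) = |x|\bigl(1 - \tanh(|x|/\mu)\bigr) \ge 0$, giving the lower bound immediately.

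For the upper bound, the substitution $y = x/\mu$ shows that $|x| - x\tanh(x/\mu) = \mu\bigl(|y| - y\tanh(y)\bigr)$, so it suffices to bound the $\mu$-independent quantity $|y| - y\tanh(y)$ by $0.2785$. Using the same evenness observation, I reduce to maximizing $f(y) := y(1 - \tanh y)$ over $y \ge 0$. Since $f(0) = 0$ and $f(y) \to 0$ as $y \to \infty$ (because $1 - \tanh y \sim 2e^{-2y}$ decays exponentially), a global maximum is attained at some interior critical point $y_*$.

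To locate $y_*$, I would compute $f'(y) = (1 - \tanh y) - y\,\mathrm{sech}^2 y$ and use the identity $\mathrm{sech}^2 y = (1-\tanh y)(1+\tanh y)$ to factor it as $f'(y) = (1-\tanh y)\bigl[1 - y(1+\tanh y)\bigr]$. The critical condition therefore reduces to the scalar equation $y_*(1 + \tanh y_*) = 1$, equivalently $\tanh y_* = 1/y_* - 1$. Substituting this into $f(y_*)$ yields the clean expression $f(y_*) = 2y_* - 1$.

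The main obstacle is that the critical equation is transcendental with no closed form; one must evaluate $y_*$ numerically, obtaining $y_* \approx 0.6391$ and hence $f(y_*) \approx 0.2785$. So the constant $0.2785$ is genuinely a numerical artifact, and the rigorous part of the proof is really the verification that this approximate root does upper-bound $f$, which can be done by sandwiching $y_*$ in a short interval and bounding $f$ by monotonicity on either side, or by appealing to the standard reference in \cite{walls_globally_2005} where this bound is established.
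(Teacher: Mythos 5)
Your proposal is correct. Note, however, that the paper does not actually prove this lemma at all: it is stated as a known result and dispatched with the citation to \cite{walls_globally_2005}, so your self-contained calculus argument is genuinely additional content rather than a variant of the paper's proof. Your reduction is sound at every step: the evenness of $x\mapsto x\tanh(x/\mu)$ gives the lower bound, the scaling $y=x/\mu$ isolates the constant, and the factorization $f'(y)=(1-\tanh y)\left[1-y(1+\tanh y)\right]$ is the right move --- since $1-\tanh y>0$ and $y(1+\tanh y)$ is strictly increasing, the critical point $y_*$ is unique and is the global maximizer (you could state this uniqueness explicitly to justify that $f(y_*)$ is the supremum rather than merely a critical value). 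One pleasant observation you stop just short of: writing $1+\tanh y=2/(1+e^{-2y})$ turns the critical equation $y_*(1+\tanh y_*)=1$ into $2y_*-1=e^{-2y_*}$, so with $\kappa:=2y_*-1=f(y_*)$ one gets $\kappa=e^{-(1+\kappa)}$, which is exactly the classical transcendental characterization of the constant $0.2785$ (more precisely $\kappa\approx 0.27846<0.2785$) used in the original reference. That identity makes the ``numerical artifact'' fully rigorous with a one-line monotonicity check on $\kappa\mapsto \kappa-e^{-(1+\kappa)}$, and would let you close the argument without interval sandwiching.
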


\begin{lemma}
	\label{lemma_PTS}
	\cite{sanchez_class_2018} Assume that there exists a continuous positive-definite radially unbounded function denoted as $V:\mathbb{R}^{n\times n} \to \mathbb{R}_{+} \cup\left\{0\right\}$, such that:
	\begin{equation}
		\begin{aligned}
			V\left(0\right) &= 0 \\ \quad V\left(\boldsymbol{x}\right) > 0, & \forall \boldsymbol{x} \neq \boldsymbol{0}
		\end{aligned}
	\end{equation} For arbitrary real number $T_{c} \in \left(0,+\infty\right)$, $a \in \left(0,+\infty\right)$ and $p \in$ $\left(0,1\right)$. Take the time-derivative of $V$, if the time derivative of $V$ satisfies:
	\begin{equation}
		\dot{V} \le -\frac{1}{pT_{c}}e^{aV^{p}}V^{1-p}
	\end{equation}
	then, the origin of the system is globally predefined-time stabled, and the settling time will satisfy $T_{\text{set}} \le T_{\text{max}} = \frac{1}{a}T_{c}$.
\end{lemma}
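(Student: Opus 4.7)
The plan is to reduce the differential inequality to a scalar comparison equation and then integrate it explicitly via a change of variables that turns the exponential-weighted ODE into a linear one. Since $V$ is positive along the trajectory until it reaches $0$, I can work pathwise with separation of variables; the key payoff we need is that the resulting bound on the time to reach $V=0$ is independent of $V(0)$.

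First I would invoke the standard comparison lemma to replace the inequality $\dot V \le -\tfrac{1}{pT_c}e^{aV^{p}}V^{1-p}$ with the reference ODE $\dot W = -\tfrac{1}{pT_c}e^{aW^{p}}W^{1-p}$, $W(0)=V(0)$, and conclude $V(t)\le W(t)$ for as long as both are positive; by positive-definiteness and radial unboundedness of $V$, it suffices to show $W$ reaches the origin in at most $T_c/a$ seconds. Next I would separate variables in the $W$-equation, obtaining, on any interval where $W>0$,
\begin{equation}
e^{-aW^{p}}\,W^{p-1}\,dW \;\le\; -\frac{1}{pT_c}\,dt.
\end{equation}
The natural substitution $u = W^{p}$, so that $du = pW^{p-1}dW$, is the step that collapses the problem: it turns the left-hand side into $\tfrac{1}{p}e^{-au}\,du$, cancelling the factor $p$ and producing the clean scalar equation $e^{-au}\,du = -T_c^{-1}dt$.

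Integrating from $t=0$ (where $u=W(0)^{p}=V_0^{p}$) to a candidate settling time $t=T_{\text{set}}$ (where $u=0$), I get
\begin{equation}
\frac{1}{a}\bigl(1 - e^{-aV_0^{p}}\bigr) \;\ge\; \frac{T_{\text{set}}}{T_c},
\end{equation}
which rearranges to
\begin{equation}
T_{\text{set}} \;\le\; \frac{T_c}{a}\bigl(1 - e^{-aV_0^{p}}\bigr) \;<\; \frac{T_c}{a}.
\end{equation}
Because the factor $1-e^{-aV_0^{p}}$ is strictly less than $1$ for every finite $V_0\ge 0$, the bound $T_{\text{set}}\le T_{\max}=T_c/a$ holds uniformly in the initial condition, which is precisely the predefined-time statement.

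The only genuinely delicate point will be justifying that $W$ actually reaches $0$ in finite time rather than merely being bounded by the above estimate. I would handle this by a contradiction argument: if $W(t)>0$ for all $t<T_c/a$, the separation-of-variables calculation above integrated up to any such $t$ yields $t \le \tfrac{T_c}{a}(1-e^{-aW(t)^{p}})$, which combined with the monotone decrease of $W$ forces $W(t)\to 0$ as $t\to T_c/a$; then one extends $W$ by $0$ past that time and invokes uniqueness of the trivial equilibrium solution together with the comparison lemma once more to conclude that $V(t)\equiv 0$ for $t\ge T_{\text{set}}$, completing the proof.
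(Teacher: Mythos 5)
Your proposal is correct and follows essentially the same route as the paper's own argument (given in the proof of Theorem~\ref{theoremone}, which reduces to this lemma when $\Theta=0$): separate variables, recognize $-e^{-aV^{p}}V^{p-1}$ as the derivative of $\tfrac{1}{ap}e^{-aV^{p}}$ via the substitution $u=V^{p}$, and obtain $T_{\text{set}}\le \tfrac{T_c}{a}\left(1-e^{-aV_0^{p}}\right)<\tfrac{T_c}{a}$ uniformly in $V_0$. Your additional care with the comparison lemma and the finite-time attainment of $W=0$ only makes explicit steps the paper leaves implicit.
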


\begin{lemma}
	\label{lemma_chebyshev}
	(Chebyshev's sum inequality) \cite{hardy_1952_inequalities} For $n\in \mathbb{N}^{+}$, if $a = \left\{a_{1},...a_{n}\right\}$ and $b = \left\{b_{1},...b_{n},\right\}$ are two similarly ordered real number sequences such that $a_{1} \le a_{2} \le...a_{n}$ and $b_{1} \le b_{2} \le,...b_{n}$ or  $a_{1} \ge a_{2} \ge...a_{n}$ and $b_{1} \ge b_{2} \ge,...b_{n}$, the following inequalities will be always holds:
	\begin{equation}
		\sum_{i=1}^{n}a_{i}b_{i} \ge \frac{1}{n}\left(\sum_{i=1}^{n}a_{i}\right)\left(\sum_{i=1}^{n}b_{i}\right)
	\end{equation}
\end{lemma}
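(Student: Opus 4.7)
The plan is to establish Chebyshev's sum inequality by the classical symmetric double-sum device. The key object I would introduce is
\begin{equation*}
S = \sum_{i=1}^{n}\sum_{j=1}^{n}(a_i - a_j)(b_i - b_j).
\end{equation*}
The first step is to observe that, under the similarly ordered hypothesis, every term of $S$ is nonnegative. Indeed, if $a_i \ge a_j$ then $b_i \ge b_j$ by monotonicity in the same direction, so both factors share sign; the reverse case is symmetric, and the descending case (both sequences non-increasing) is identical because negating the order of both sequences leaves every product $(a_i - a_j)(b_i - b_j)$ unchanged. Hence $S \ge 0$ termwise.

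The second step is a routine algebraic expansion of $S$. Distributing the product yields four double sums; exploiting the independence of the two summation indices, two of them reduce to $n\sum_i a_i b_i$ each, and the remaining two each reduce to $(\sum_i a_i)(\sum_j b_j)$. Collecting these gives the identity
\begin{equation*}
S = 2n\sum_{i=1}^{n}a_i b_i - 2\left(\sum_{i=1}^{n}a_i\right)\left(\sum_{i=1}^{n}b_i\right).
\end{equation*}
Combining with $S \ge 0$ and dividing by $2n$ yields the claim directly.

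I do not anticipate a real obstacle; the only conceptual choice is picking the correct auxiliary sum so that the pointwise sign structure of the hypothesis is exploited. An alternative route would invoke the rearrangement inequality, averaging $\sum_i a_i b_{\sigma(i)}$ over all cyclic permutations $\sigma$, which yields $(1/n)(\sum_i a_i)(\sum_i b_i)$ as the average and $\sum_i a_i b_i$ as the maximum (the similarly ordered pairing). However, the direct symmetric-sum argument is shorter, self-contained, and requires no external machinery, so I would prefer it.
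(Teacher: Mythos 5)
Your proof is correct: the termwise nonnegativity of $\sum_{i,j}(a_i-a_j)(b_i-b_j)$ under the similarly ordered hypothesis, together with the expansion $S = 2n\sum_i a_ib_i - 2\left(\sum_i a_i\right)\left(\sum_i b_i\right)$, gives the inequality immediately. Note that the paper supplies no proof of this lemma at all --- it is quoted directly from Hardy, Littlewood and P\'olya --- so there is nothing to compare against; your symmetric double-sum argument is the standard self-contained derivation and is complete as written.
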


\begin{lemma}
	\label{lemma_mean}
	(Inequality of arithmetic and geometric means)\cite{hardy_1952_inequalities} 
	Consider a positive number sequence as $a = \left\{a_{1},a_{2},...a_{n}\right\}\left(n\in\mathbb{N}^{+}\right)$, the following property will be obtained:
	\begin{equation}
		\left(\prod_{i=1}^{n}a_{i}\right)^{\frac{1}{n}} \le
		\frac{1}{n}\sum_{i=1}^{n}a_{i}
	\end{equation}
\end{lemma}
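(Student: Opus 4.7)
The plan is to prove this by the classical forward--backward induction of Cauchy, which requires only elementary manipulations. First I would dispatch the trivial case $n=1$ (both sides equal $a_1$) and establish the base case $n=2$ directly: since $(\sqrt{a_1}-\sqrt{a_2})^2 \ge 0$ for positive $a_1,a_2$, expansion yields $a_1+a_2 \ge 2\sqrt{a_1 a_2}$, which is exactly the claim for two terms.

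Next comes the doubling step. Assuming the inequality for $n$ terms, apply it separately to two disjoint groups of $n$ numbers to obtain $(a_1\cdots a_n)^{1/n}\le A$ and $(a_{n+1}\cdots a_{2n})^{1/n}\le B$, where $A$ and $B$ denote the two corresponding arithmetic means. The $n=2$ base case then gives $(a_1\cdots a_{2n})^{1/(2n)} \le (AB)^{1/2} \le (A+B)/2$, which is the $2n$-term inequality. Iterating this doubling establishes the claim along every power of two.

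The backward step fills in the remaining indices. Assuming the inequality for $n$ terms, I would set the $n$th entry equal to the arithmetic mean $\bar{a} = \frac{1}{n-1}\sum_{i=1}^{n-1}a_i$ of the first $n-1$ chosen positive numbers. The arithmetic mean of the enlarged tuple $(a_1,\ldots,a_{n-1},\bar{a})$ still equals $\bar{a}$, so the hypothesis yields $(a_1\cdots a_{n-1}\bar{a})^{1/n} \le \bar{a}$; raising to the $n$th power and dividing through by $\bar{a}$ produces $(a_1\cdots a_{n-1})^{1/(n-1)} \le \bar{a}$, which is the inequality for $n-1$ terms. Combined with the doubling step, every positive integer $n$ is covered: for any $n$ we may first double up to some $2^k \ge n$ and then step down $2^k-n$ times.

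The main obstacle is purely bookkeeping: the backward step requires care that all $a_i$ remain strictly positive so that rational powers and divisions by $\bar{a}$ are well-defined, and the final coverage argument must be stated cleanly. As a sanity check, one could alternatively invoke the concavity of $\ln$ on $(0,\infty)$ together with Jensen's inequality, $\frac{1}{n}\sum \ln a_i \le \ln\!\left(\frac{1}{n}\sum a_i\right)$, and exponentiate to recover the same bound; but the induction proof avoids any appeal to analytic machinery and is fully self-contained.
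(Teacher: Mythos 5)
Your proof is correct. Note, however, that the paper does not actually prove this lemma at all: it simply states the arithmetic--geometric mean inequality and cites Hardy, Littlewood and P\'olya's \emph{Inequalities} as the source, treating it as a standard known result. Your forward--backward (Cauchy) induction is the classical self-contained argument: the $n=2$ case from $(\sqrt{a_1}-\sqrt{a_2})^2\ge 0$, the doubling step via two groups of $n$ and the two-term case applied to the group means, and the backward step obtained by appending the arithmetic mean $\bar a$ of the first $n-1$ terms and dividing by $\bar a^{\,}$ after raising to the $n$th power --- all of these steps check out, and the coverage argument (double to $2^k\ge n$, then descend) is stated correctly. Compared with the paper's bare citation, your route buys a fully elementary, self-contained verification at the cost of some length; the Jensen-via-concavity-of-$\ln$ alternative you mention would be shorter but imports analytic machinery. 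Either is acceptable; there is no gap.
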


\begin{lemma}
	\label{lemma_beta}
	 For a series of  positive numbers $x_{i}\left(i = 1,2...n\right)$ and a constant number $\beta \in \left(0,1\right)$, one can be obtained that \cite{chen_neural_2022}:
	\begin{equation}
		\sum_{i=1}^{n}x_{i}^{\beta} \ge \left(\sum_{i=1}^{n}x_{i}\right)^{\beta}
	\end{equation}
\end{lemma}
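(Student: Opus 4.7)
The plan is to normalize the variables so that they sum to one, then exploit a simple pointwise inequality on $(0,1]$ that flips the direction of the power. Concretely, I would set $s = \sum_{i=1}^n x_i$, which is strictly positive since every $x_i$ is assumed positive, and introduce the rescaled quantities $y_i = x_i/s$. By construction each $y_i \in (0,1]$ and $\sum_{i=1}^n y_i = 1$, reducing the problem to showing that $\sum_{i=1}^n y_i^\beta \ge 1$.

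The key pointwise fact I would then establish is that $y^\beta \ge y$ for every $y \in (0,1]$ whenever $\beta \in (0,1)$. This is immediate from $y^{\beta-1} \ge 1$ on $(0,1]$ (since $\beta-1<0$), after multiplying through by $y>0$; equivalently, taking logarithms gives $\beta \ln y \ge \ln y$ because $\ln y \le 0$. Summing this pointwise inequality over $i$ yields
\begin{equation}
\sum_{i=1}^n y_i^\beta \;\ge\; \sum_{i=1}^n y_i \;=\; 1.
\end{equation}

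To finish, I would multiply both sides of the above by $s^\beta$ and use the homogeneity $x_i^\beta = s^\beta y_i^\beta$ to recover
\begin{equation}
\sum_{i=1}^n x_i^\beta \;\ge\; s^\beta \;=\; \Bigl(\sum_{i=1}^n x_i\Bigr)^\beta,
\end{equation}
which is exactly the claimed inequality. An equally viable alternative would be induction on $n$: the base case $n=1$ is trivial, and the inductive step reduces to the two-term inequality $a^\beta + b^\beta \ge (a+b)^\beta$, which itself follows from the same normalization trick applied with $n=2$.

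The main obstacle is essentially \emph{nonexistent}: once the normalization is in place, the argument is a one-liner. The only points that require mild care are verifying that $s>0$ so dividing by $s$ is legitimate (this uses strict positivity of the $x_i$), and noting that the pointwise inequality $y^\beta \ge y$ degenerates to equality at $y=1$, which in turn means the overall inequality is sharp only in the trivial one-term case. No sophisticated tools such as convexity, Jensen, or Hölder are needed, although one could alternatively frame the proof as an application of concavity of $t \mapsto t^\beta$ combined with subadditivity.
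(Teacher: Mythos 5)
Your proof is correct and complete: the normalization $y_i = x_i/s$ with $s=\sum_i x_i>0$, the pointwise bound $y^\beta \ge y$ on $(0,1]$ for $\beta\in(0,1)$, and the homogeneity step $x_i^\beta = s^\beta y_i^\beta$ fit together without any gaps. The paper itself offers no proof of this lemma --- it is stated with only a citation to the literature --- so there is no in-paper argument to compare against; your elementary derivation (or the equivalent induction via the two-term case $a^\beta+b^\beta\ge(a+b)^\beta$) is exactly the standard justification one would supply.
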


\begin{lemma}
	\label{lemma_last}
	Considering the function $c\left(x\right)$ expressed as $c\left(x\right) = e^{x^{p}}x^{1-p} - x$, for $x\in \left(0,+\infty\right)$,  $p\in\left(0,1\right)$, the following result will be satisfied
	\begin{equation}
		c\left(x\right) = e^{x^{p}}x^{1-p}-x \ge \left(x^{p}+1\right)x^{1-p}-x=x^{1-p}>0
	\end{equation}
	
\end{lemma}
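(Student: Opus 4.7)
The plan is to reduce the claim to the elementary exponential inequality $e^{y}\ge 1+y$ valid for all $y\ge 0$, applied at $y=x^{p}$.

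First I would note that since $x\in(0,+\infty)$ and $p\in(0,1)$, the quantity $y:=x^{p}$ is strictly positive, so in particular $y\ge 0$ and the Taylor expansion $e^{y}=\sum_{k\ge 0} y^{k}/k!\ge 1+y$ applies. Substituting back gives the pointwise bound
\begin{equation}
e^{x^{p}} \ge x^{p}+1.
\end{equation}

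Next I would multiply this inequality by the strictly positive factor $x^{1-p}$, which preserves the direction of the inequality, obtaining
\begin{equation}
e^{x^{p}}x^{1-p} \ge (x^{p}+1)x^{1-p} = x^{p}\cdot x^{1-p}+x^{1-p} = x + x^{1-p}.
\end{equation}
Subtracting $x$ from both sides yields $c(x)=e^{x^{p}}x^{1-p}-x\ge x^{1-p}$, and since $x>0$ with $1-p>0$ we indeed have $x^{1-p}>0$, which finishes the chain of inequalities displayed in the statement.

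There is essentially no obstacle here: the only ingredient beyond arithmetic is the universal bound $e^{y}\ge 1+y$, and the hypotheses $x>0$, $p\in(0,1)$ are exactly what is needed to guarantee both that this bound applies (via $y=x^{p}\ge 0$) and that the multiplicative factor $x^{1-p}$ is positive, so that the inequality survives the multiplication and the final lower bound $x^{1-p}$ is strictly positive. No case analysis or differentiation of $c$ is required.
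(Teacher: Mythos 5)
Your proof is correct and is essentially identical to the paper's own argument: the displayed chain in the lemma already encodes the bound $e^{x^{p}}\ge 1+x^{p}$ followed by multiplication by the positive factor $x^{1-p}$ and subtraction of $x$. You have merely made explicit the justification via the Taylor expansion of $e^{y}$, which is exactly the intended reasoning.
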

\begin{theorem}
	
	\label{theoremone}
	Assume that there exists a continuous positive-definite radially unbounded function denoted as $V:\mathbb{R}^{n\times n} \to \mathbb{R}_{+} \cup\left\{0\right\}$, such that:
	\begin{equation}
		\begin{aligned}
			V\left(0\right) &= 0 \\ \quad V\left(\boldsymbol{x}\right) > 0, & \forall \boldsymbol{x} \neq \boldsymbol{0}
		\end{aligned}
	\end{equation} For arbitrary real number $T_{c} \in \left(0,+\infty\right)$, $a \in \left(0,+\infty\right)$ and $p \in$ $\left(0,1\right)$. Take the time-derivative of $V$, if the time derivative of $V$ satisfies:
	\begin{equation}
		\label{dVsystem}
		\dot{V} \le -\frac{1}{pT_{c}}e^{aV^{p}}V^{1-p} + \Theta
	\end{equation}
	where $\Theta \in \left(0,+\infty\right)$ is a residual term. therefore, the dynamical system will be practically predefined-time stabled, and the solution of the dynamical system (\ref{dVsystem}) will converge to a residual set in a predefined-time $T_{\text{set}} \le \frac{1}{a\mu}T_{c}$, while $\mu$ is a positive constant satisfies $\mu \in\left(0,1\right)$. The residual set of the solution can be expressed as follows:
	\begin{equation}
		\left\{\boldsymbol{x}|e^{aV^{p}\left(\boldsymbol{x}\right)}V^{1-p}\left(\boldsymbol{x}\right) \le \frac{pT_{c}\Theta}{1-\mu}\right\}
	\end{equation}
\end{theorem}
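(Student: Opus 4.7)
The plan is to convert the practical predefined-time inequality in (\ref{dVsystem}) into the clean form addressed by Lemma~\ref{lemma_PTS} by isolating a piece of the stabilizing term that absorbs the residual $\Theta$. I will split the negative term using the auxiliary parameter $\mu\in(0,1)$, writing
\begin{equation*}
\dot{V}\le -\frac{1-\mu}{pT_{c}}e^{aV^{p}}V^{1-p}-\frac{\mu}{pT_{c}}e^{aV^{p}}V^{1-p}+\Theta,
\end{equation*}
and then restricting attention to trajectories that lie outside the candidate residual set $\Omega=\{\boldsymbol{x}:e^{aV^{p}(\boldsymbol{x})}V^{1-p}(\boldsymbol{x})\le pT_{c}\Theta/(1-\mu)\}$, which is exactly the set specified in the statement.

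Next, on the complement of $\Omega$ the defining inequality gives $\frac{1-\mu}{pT_{c}}e^{aV^{p}}V^{1-p}>\Theta$, so the first negative term absorbs the disturbance and leaves
\begin{equation*}
\dot{V}\le -\frac{\mu}{pT_{c}}e^{aV^{p}}V^{1-p}=-\frac{1}{p(T_{c}/\mu)}e^{aV^{p}}V^{1-p}.
\end{equation*}
This is precisely the hypothesis of Lemma~\ref{lemma_PTS} with the reshaped predefined-time constant $T_{c}/\mu$ replacing $T_{c}$. Invoking that lemma yields a settling time into $\Omega$ bounded by $\tfrac{1}{a}\cdot\tfrac{T_{c}}{\mu}=\tfrac{T_{c}}{a\mu}$, matching the claimed $T_{\mathrm{set}}$.

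It remains to close the argument by showing that $\Omega$ is forward invariant, so that the trajectory cannot re-escape after the time $T_{c}/(a\mu)$. Since the map $V\mapsto e^{aV^{p}}V^{1-p}$ is strictly increasing on $(0,+\infty)$ (as is routinely checked from its derivative), $\Omega$ coincides with a sublevel set $\{V\le V^{\ast}\}$; at any boundary point the same split-and-dominate estimate gives $\dot{V}\le 0$, so trajectories cannot cross outward. The main obstacle I anticipate is formalising this invariance rigorously, because Lemma~\ref{lemma_PTS} is stated for an inequality valid globally, while here the clean inequality only holds on the complement of $\Omega$. The standard remedy is a comparison-principle argument: one compares $V$ with the solution of the scalar ODE $\dot{w}=-\tfrac{1}{p(T_{c}/\mu)}e^{aw^{p}}w^{1-p}$ up to the first entry time into $\Omega$, and then uses the non-positivity of $\dot{V}$ on $\partial\Omega$ to conclude invariance. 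Once this technical bridge is in place, the remaining computations are the routine substitutions already carried out above.
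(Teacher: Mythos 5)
Your proof follows essentially the same route as the paper: the same $\mu$-split of the stabilizing term, the same absorption of $\Theta$ on the complement of the residual set, and the same settling-time bound $T_{c}/(a\mu)$ (you obtain it by invoking Lemma~\ref{lemma_PTS} with the rescaled constant $T_{c}/\mu$, whereas the paper re-derives it by integrating the comparison ODE directly --- these are the same computation). Your added discussion of forward invariance of the residual set, via the monotonicity of $V\mapsto e^{aV^{p}}V^{1-p}$ and a comparison argument, addresses a point the paper's proof leaves implicit and is a worthwhile supplement rather than a deviation.
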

\begin{proof}
	
	Consider the dynamical system expressed in equation (\ref{dVsystem}), by applying the positive constant $\mu$, the inequality (\ref{dVsystem}) can be rearranged into the following form, expressed as follows:
	\begin{equation}
		\dot{V} \le -\frac{\mu}{pT_{c}}e^{aV^{p}}V^{1-p}  
		-\frac{1-\mu}{pT_{c}}e^{aV^{p}}V^{1-p} + \Theta  
	\end{equation}
	therefore, for $\Theta \le \frac{1-\mu}{pT_{c}}\exp\left(aV^{p}\right)V^{1-p}$, it should be noted that $ \dot{V} \le -\frac{\mu}{pT_{c}}e^{aV^{p}}V^{1-p}$ will be satisfied. Further, consider the strictly equality condition as follows:
	\begin{equation}
		\label{dVV}
		\dot{V} = -\frac{\mu}{pT_{c}}e^{aV^{p}}V^{1-p}
	\end{equation}
	Integrate the equation, the equation (\ref{dVV}) can be rearranged into the following form as
	\begin{equation}
		\frac{\mu}{pT_{c}}T\left(\boldsymbol{x}_{0}\right) = 
		\int_{V_{0}}^{0}\left(-e^{-aV^{p}}V^{p-1}\right)dV
	\end{equation}
	thus, we have:
	\begin{equation}
		\frac{\mu}{pT_{c}}T\left(\boldsymbol{x}_{0}\right) = 
		\frac{1}{ap}e^{-aV^{p}}\big|^{0}_{V_{0}} = \frac{1}{ap}\left[1 - e^{-aV_{0}^{p}}\right] \le \frac{1}{ap}
	\end{equation}
	Finally, we have the following conclusion that the system will fall into the residual region in a predefined-time related to $a$, $\mu$, $T_{c}$ as
	\begin{equation}
		\sup_{\boldsymbol{x}_{0}\in\mathbb{R}^{n}}T\left(\boldsymbol{x}_{0}\right) = \frac{1}{a\mu}T_{c}
	\end{equation}
	and the residual of the system is  expressed as follows:
	\begin{equation}
		\label{Rset}
		\left\{\boldsymbol{x}|e^{aV^{p}\left(\boldsymbol{x}\right)}V^{1-p}\left(\boldsymbol{x}\right) \le \frac{pT_{c}\Theta}{1-\mu}\right\}
	\end{equation}
	it should be noted that for $\Theta = 0$, the residual set $\ref{Rset}$ of the system will be equal to $0$. Thus, the conclusion of Theorem 1 is the same as the lemma {\ref{lemma_PTS}}. The conclusion of theorem 1 can be regarded as a generalized condition of lemma {\ref{lemma_PTS}}, as real systems may suffer from various kinds of perturbations. 
\end{proof}

	\section{Problem Formulation}
\label{model}
\subsection{Attitude Kinematics and Dynamics}
Consider the attitude error kinematic and dynamic equation of a rigid-body spacecraft expressed in the normalized quaternion, the attitude error system can be modeled as follows:
\cite{wertz_2012_spacecraft}
\begin{equation}
	\centering
	\label{errorsystem}
	\begin{aligned}
		\dot{\boldsymbol{q}}_{ev} &= \boldsymbol{\varGamma}\left(\boldsymbol{q}_e\right)\boldsymbol{\omega}_e\\
		\dot{q_{e0}} &= -\frac{1}{2}\boldsymbol{q}^{\text{T}}_{ev}\boldsymbol{\omega_{e}} \\
		\boldsymbol{J}\dot{\boldsymbol{\omega}}_e =  \boldsymbol{J}\boldsymbol{\omega}^{\times}_e\boldsymbol{C}_e\boldsymbol{\omega}_d 
		&- \boldsymbol{J}\boldsymbol{C}_e\dot{\boldsymbol{\omega}}_d
		-\boldsymbol{\omega}_s^{\times}\boldsymbol{J}\boldsymbol{\omega}_s + \boldsymbol{u} + \boldsymbol{d}
	\end{aligned}
\end{equation}
where $\boldsymbol{q}_{e} = \left[q_{e0}\quad \boldsymbol{q}_{ev}^{\text{T}}\right]^\text{T}\in\mathbb{R}^{4}$ denotes the attitude error quaternion of the spacecraft's body-fixed axes with respect to the inertial frame, $q_{e0}\in\mathbb{R}$ and $\boldsymbol{q}_{ev} = \left[q_{ev1},q_{ev2},q_{ev3}\right]^{\text{T}}\in\mathbb{R}^3$ represents the scalar part and the vector part of the attitude error quaternion, respectively. $\boldsymbol{\omega}_{e}= \left[\omega_{e1},\omega_{e2},\omega_{e3}\right]^{\text{T}} \in\mathbb{R}^{3}$ represents the error angular velocity of the spacecraft with respect to the inertial frame $\mathfrak{R}_i$ expressed in the current body-fixed frame $\mathfrak{R}_b$. The  inertial matrix of the spacecraft with respect to the body-fixed frame $\mathfrak{R}_b$ is denoted as $\boldsymbol{J}\in\mathbb{R}^{3\times 3}$, which is a known constant matrix. $\boldsymbol{u} = [u_{1},u_{2},u_{3}]^{\text{T}}\in\mathbb{R}^3$ denotes the control input, while $\boldsymbol{d}\in\mathbb{R}^3$ denotes the unknown lumped external disturbances. $\boldsymbol{C}_e \in \mathbb{R}^{3\times 3}$ represents the attitude transformation matrix calculated by $	\boldsymbol{C}_e = \left(q^2_{e0}-\boldsymbol{q}_{ev}^\text{T}\boldsymbol{q}_{ev}\right)\boldsymbol{I}_3+2\boldsymbol{q}_{ev}\boldsymbol{q}_{ev}^\text{T}-2q_{e0}\boldsymbol{q}^{\times}_{ev}$.

$\boldsymbol{\varGamma}\left(\boldsymbol{q}_{e}\right)\in\mathbb{R}^{3 \times 3}$ represents the Jacobian matrix of the attitude error quaternion $\boldsymbol{q}_{e}$ that can be expressed as follows:
\begin{equation}
	\label{Jacobian}
	\boldsymbol{\varGamma}\left(\boldsymbol{q}_{e}\right) = \frac{1}{2}\left(q_{e0}\boldsymbol{I}_3 + \boldsymbol{q}_{ev}^{\times}\right)
\end{equation}
 For further analysis, the normalized desired attitude quaternion is denoted as $\boldsymbol{q}_d = \left[q_{d0} \quad \boldsymbol{q}_{dv}^{\text{T}}\right]^\text{T}\in\mathbb{R}^4$, and the desired angular velocity of the spacecraft with respect to the inertial frame $\mathfrak{R}_i$ expressed in the target body-fixed frame $\mathfrak{R}_{d}$ is denoted as $\boldsymbol{\omega}_d\in \mathbb{R}^{3}$.
According to \cite{shi_satellite_2021,wei_learning-based_2019}, we make following assumptions for the synthesis of the proposed control scheme.

\begin{assumption}
	{\label{assump_J}}
	The inertial matrix $\boldsymbol{J}$ of the spacecraft is a known symmetric positive-definite matrix. We define the maxima and the minima of its eigen value as $J_{\text{min}}$ and $J_{\text{max}}$ respectively, thus we have:
	\begin{equation}
		J_{\text{min}}\boldsymbol{x}^{\text{T}}\boldsymbol{x} \le \boldsymbol{x}^{\text{T}}\boldsymbol{J}\boldsymbol{x} \le
		J_{\text{max}}\boldsymbol{x}^{\text{T}}\boldsymbol{x}
	\end{equation}
\end{assumption} 

\begin{assumption}
	{\label{assump_Dis}}
	The external disturbance is unknown but bounded by a known constant, i.e., $\|d\|$ $\le D_{m}$.
\end{assumption}

\begin{assumption}
	\label{assump_Qd}
	The desired attitude trajectory $\boldsymbol{q}_{d}$ along with its derivatives $\dot{\boldsymbol{q}}_{d}$ and $\ddot{\boldsymbol{q}}_{d}$ are all smooth functions and are bounded by positive constants. Therefore, there exists a positive constant $U_{0}$ such that $\boldsymbol{q}_{d}^{2} + \dot{\boldsymbol{q}}_{d}^{2} + \ddot{\boldsymbol{q}}_{d}^{2} \le U_{0}$ is always holds.
\end{assumption}

\subsection{Control Objective}
The controller is expected to guarantee the satisfaction of all the performance requirements even when significant disturbances exist. Further, the state trajectory is expected to track a given curve, enabling us to assign the system's transient behavior directly.

\section{Problem Solution}
\label{solution}
The detail elaboration of the proposed SAPPC scheme is organized as follows: A brief introduction of the proposed SAPPC structure is elaborated in section IV.\ref{Structure}, while detailed introduction of the novel RPF, time-varying state constraint and the novel shear mapping-based error transformation function (SMETF) is presented in section IV.\ref{PF}, IV.\ref{Constraint} and IV.\ref{SMETF} respectively.
Further, a thorough theoretical comparison analysis of the traditional PPC scheme and the proposed SAPPC scheme is presented in section IV.\ref{Comparison}. Finally, a backstepping controller wtih predefined-time stability and dynamic surface filter is developed based on the SAPPC scheme, detailed in section IV.\ref{ControlLaw}.

\subsection{Structure of the proposed SAPPC scheme}
\label{Structure}	

Compared with the traditional PPC scheme, some significant changes are applied in our schemes, and the main change is embodied in the error transformation procedure. The shear mapping-based error transformation function (SMETF) can be regarded as a combination of the original homeomorphic mapping function and the shear mapping, which can perform the error transformation without singularity. Further, the constant constraint boundary is replaced by the time-varying constraint boundary to exert appropriate constraint strength at different control stages. Besides, the original performance function is replaced by a newly-designed reference trajectory function (RPF). A comparison sketch map is illustrated in the Figure [\ref{SAPPCstruct}].
·
\begin{figure}[hbt!]
	\centering 
	\includegraphics[scale=0.25]{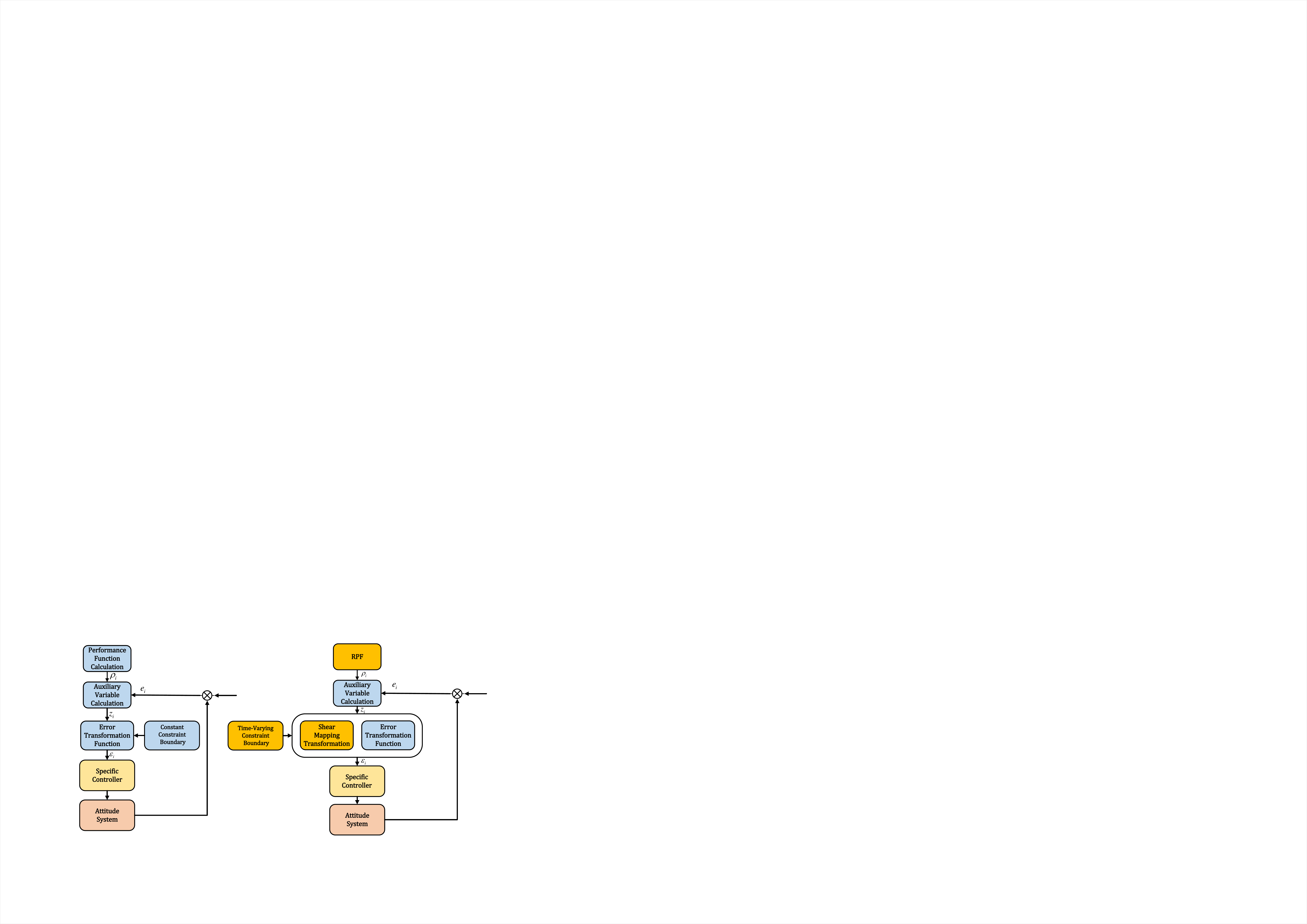}  
	\caption{Structure Comparison of the Traditional PPC and the SAPPC}
	\label{SAPPCstruct}
\end{figure}

	\subsection{The Reference Performance Function(RPF)}
\label{PF}
 Inspired by the performance function applied in traditional PPC schemes, if the actual state responding trajectory could track a given reference trajectory tightly, we would be able to control the system's transient and steady-state behavior precisely.
This section constructs a smooth piece-wise function named Reference Performance Function (RPF) to directly assign some significant performance index, such as the system settling time and terminal control error. The proposed RPF is expressed as follows:

\begin{equation}
	\label{RPF}
	\rho(t) = \begin{cases}
		\rho_{e}(t)=\left(\rho_{e0}-\rho_{e\infty}\right)e^{-lt}+\rho_{e\infty}&  0 \le t<t_{1} \\ 
		\rho_{p}(t)={a_1t^2+a_2t+a_3}                                          &  t_1\le t<t_2\\
		\rho_{c}(t)=g_\infty                                                   &  t_2 \le t
	\end{cases}
\end{equation}

In the expression (\ref{RPF}), $t_{2}\in\left(0,+\infty\right)$ is the settling time of the RPF that should be indicated directly, $\rho_{e0}$, ${\rho_{e\infty}}$ are the initial value and the asymptotes's value of the exponential function part respectively, while $g_\infty $ denotes the terminal value of the RPF, i.e. $\rho_{c}\left(t\right) = g_{\infty}$. $a_1$, $a_2$, $a_3$ and $t_1$ are the coefficients used to decide the RPF, which needs solving later. $t_{1}$ is the time instant such that $\rho_{e}\left(t_{1}\right) = \rho_{p}\left(t_{1}\right)$.
The expression of the whole RPF can be obtained by solving these following equations.
\begin{equation}
	\label{RPFsolve}
	\begin{matrix}
		\left[\dfrac{k}{2} \left(t_2 - t_1\right)-1\right] \left(\rho_{e0}-\rho_{e\infty}\right)e^{-lt_1}-\rho_{e\infty}+g_\infty=0 \\
		a_1 = \left(\rho_{e0}-\rho_{e\infty}\right)e^{-lt_1}/2\left(t_1-t_2\right)\\
		a_2 = -2a_1t_2\\
		a_3 = g_\infty+a_1t_2^2
	\end{matrix}
\end{equation}$  $
Take the time-derivative of each segmented part of the proposed RPF, the left time-derivative of the RPF is the same as the right at the segment time instant, i.e., $\dot{\rho}_{e}(t_1) = \dot{\rho}_{p}(t_1)$, $\dot{\rho}_{p}(t_2) = \dot{\rho}_{c}(t_2)$. Therefore, the given RPF is smooth and differentiable on $t\in\left(0,+\infty\right)$. Note that the selecting of the parameter should guarantee a real number solution exists for the equation (\ref{RPFsolve}).

 A reasonable principle to select a $\rho_{e0}$is let $\rho_{e0} = e\left(0\right)$, where $e\left(0\right)$ denotes the initial condition of the error state variable. Further, it should be guaranteed that  $\rho_{e\infty} < g_{\infty}$ is satisfied, or the solution will not exists. Here we given an example of the RPF curve, choose parameter as $t_{2} = 15s$, $\rho_{e0} = 0.25$, $\rho_{e\infty} = 1e-6$, $g_{\infty} =3e-5$, the corresponding RPF is illustrated in Figure [\ref{figrpf}].

\begin{figure}[hbt!]
	\centering 		\includegraphics[scale=0.3]{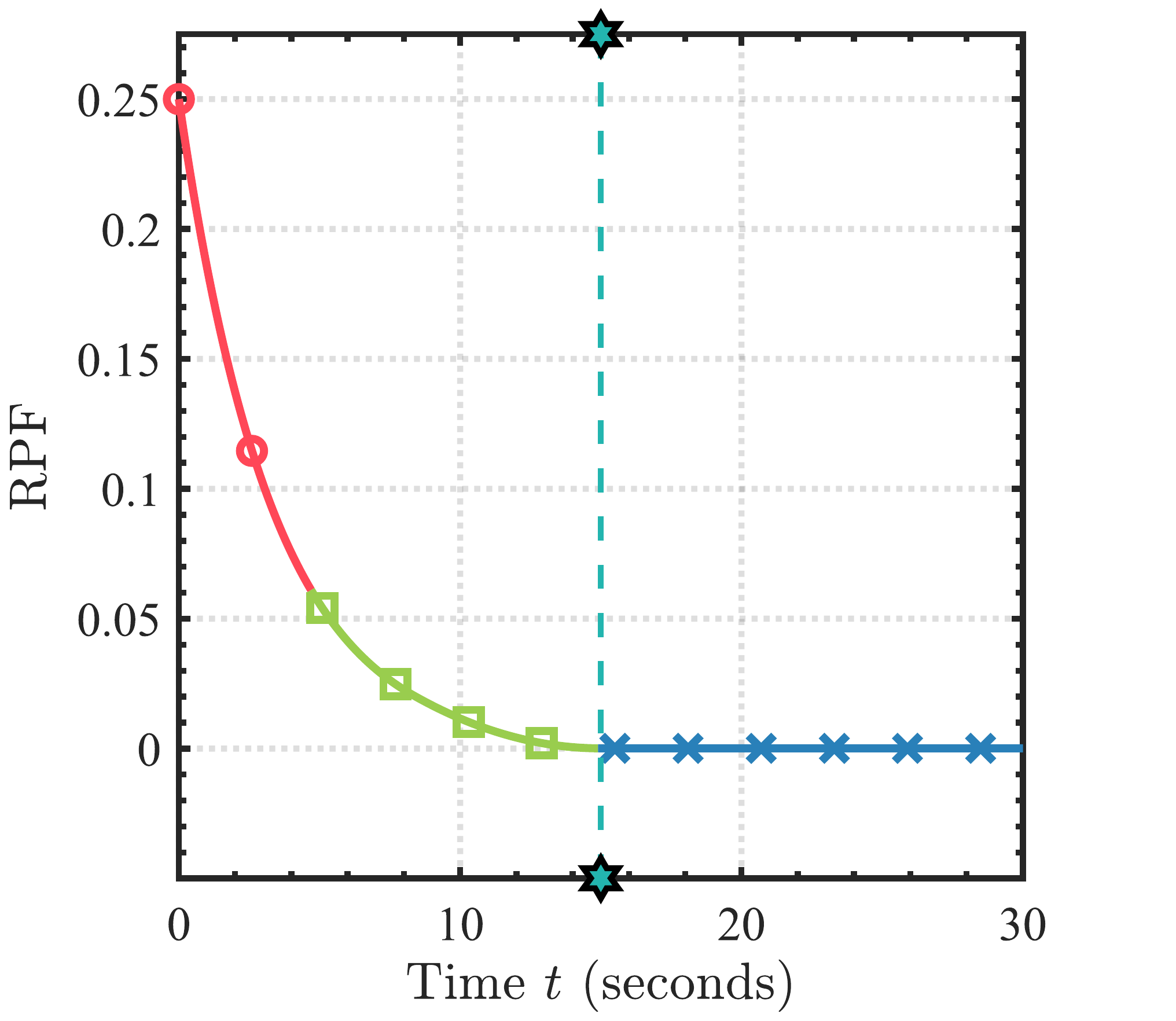}  
	\caption{The proposed Reference Trajectory Function(RPF)}  
	\label{figrpf}
\end{figure}

As Figure [\ref{figrpf}] shows, each part of the RPF is illustrated in different color separately, while the star marker represents the preassigned settling time of the RPF, indicated by $t_2 = 15s$.
The symbol of the RPF is decided by state variable's initial condition as follows:
\begin{equation}
	\begin{aligned}
		\rho\left(t\right) \le 0 \quad	\dot{\rho}\left(t\right) &\ge 0, \quad if e\left(0\right) \le 0 \\
		\rho\left(t\right) \ge 0 \quad	\dot{\rho}\left(t\right) &\le 0, \quad if e\left(0\right) \ge 0 \\
	\end{aligned}
\end{equation}

	\subsection{The Time-varying constraint boundary}
\label{Constraint}

As mentioned in section IV.\ref{PF}, to accurately meet those performance requirements, the system state trajectory needs to track the RPF tightly. For further analysis, we define an auxiliary variable as $\boldsymbol{z}_{s}\left(t\right)= \left[z_{s1}\left(t\right),...z_{si}\left(t\right)\right]^{\text{T}}\left(i=1,2,3\right)$ and an RPF as $\boldsymbol{\rho}\left(t\right) = \left[\rho_{1}(t),...\rho_{i}(t)\right]^
{\text{T}}\left(i=1,2,3\right)$, accordingly, each component of the auxiliary variable $\boldsymbol{z}_{s}(t)$ can be calculated as follows:
\begin{equation}
	z_{si}\left(t\right) = \frac{e_{i}(t)}{\rho_{i}\left(t\right)}
\end{equation}
where $e_{i}\left(t\right)$ represents the $i$ th component of the original system error state variable $\boldsymbol{e}\left(t\right) = 
\left[e_{1}(t),...e_{i}(t)\right]^{\text{T}}\left(i=1,2,3\right)$.
Further, define a constraint boundary vector as $\boldsymbol{\delta}\left(t\right) = [\delta_{1}(t),...\delta_{i}(t)]^{\text{T}}\left(i=1,2,3\right)$, where $\delta_{i}\left(t\right)$ is defined as follows:
\begin{equation}
	\label{delta_i}
	\delta_i(t) =\frac{B_{0}}{|\rho_{i}(t)|}
\end{equation}
$B_{0} > 0$ is a positive constant that should be indicated.
The state constraint in our SAPPC scheme is expressed as follows:
\begin{equation}
	\label{constraint}
	1 - \delta_i(t)< z_{si}(t) < 1 + \delta_i(t), \quad   
\end{equation} 

 Considering the time-derivative of the constraint boundary $\delta_{i}\left(t\right)$,
 $\frac{d(|{\rho_{i}}|)}{dt} \le 0$ will be always satisfied. Take the derivative of $\delta_i(t)$ with respect to time, one can be obtained that for $t\in \left(0,t_{2}\right]$, $\dot{\delta}_{i}\left(t\right) > 0$ will be satisfied; for $t\in\left(t_{2},+\infty\right)$, $\dot{\delta}_{i}\left(t\right) = 0$ will be hold.
\begin{remark}
	It can be observed that the time-varying constraint will ensure that the tracking error to the RPF will converge to a preassigned region. Unlike those constant proportional constraints applied in the PPC schemes, this kind of constraint will alleviate the chattering caused by the over-control problem. The detailed elaboration about the chattering problem will be presented in section IV.\ref{Comparison}.
\end{remark}

	\subsection{the Novel Shear Mapping-based Error Transformation Function (SMETF)}
\label{SMETF}
This section proposes a novel shear mapping-based error transformation function (SMETF) to deal with the singularity problem. 
Considering any relevant typical homeomorphic error transformation function applied in the traditional PPC scheme, we denote it as $\mathcal{R}\left(\cdot\right)$.
For any given auxiliary variable $\boldsymbol{z}_{s}$, define its corresponding translated variable calculated by the traditional error transformation function as $\boldsymbol{\varepsilon}\left(t\right) = \left[\varepsilon_{1}(t),...\varepsilon_{i}(t)\right]^{\text{T}}\left(i=1,2,3\right)$, thus we have
\begin{equation}
	\varepsilon_{i} =  \mathcal{R}\left(z_{si}\right)
\end{equation}

\begin{definition}
	Linear mapping defined as follows is called the two-dimensional shear mapping \cite{clifford_1886_common}
	
	\begin{equation}
		\label{sm}
		\begin{bmatrix}
			x_{1} \\ y_{1}
		\end{bmatrix} = 
		\begin{bmatrix}
			1 & \tan\theta \\
			0 & 1
		\end{bmatrix}
		\begin{bmatrix}
			x_{0} \\ y_{0}
		\end{bmatrix}
	\end{equation}
	where  $\left(x_0,y_0\right)$ denotes any given point in the original space and $\left(x_1,y_1\right)$ denotes the corresponding point in the transformed space.  $\theta\in\left(0,\frac{\pi}{2}\right)$ is the shear angle of the shear mapping, which influences the inclination degree of the mapped space with respect to the original one.

\end{definition}
For brevity, the time variable $t$ will be omitted for the following analysis. For any given homeomorphic PPC error transformation function $\mathcal{R}\left(\cdot\right)$, by applying the shear mapping, $\mathcal{R}\left(\cdot\right)$ can be transformed into a new one, named as SMETF.
To distinguish with the traditional PPC error transformation function $\mathcal{R}\left(\cdot\right)$, we denote the SMETF as $\mathcal{R}_{s}\left(\cdot\right)$. Define the translated variable of $z_{si}$ calculated by the SMETF as $\boldsymbol{\varepsilon}_{s}\left(t\right) = \left[\varepsilon_{s1}(t),...\varepsilon_{si}(t)\right]^{\text{T}}\left(i=1,2,3\right)$, by applying the definition of shear mapping, the error transformation procedure of SMETF can be expressed as follows:
\begin{equation}
	\label{Rs}
	\varepsilon_{si} = \mathcal{R}_{s}\left(z_{si}\right) = \mathcal{R}\left(z_{si} - \varepsilon_{si} \tan\theta\right)
\end{equation}

Define a variable as $\boldsymbol{z}_{0} = [z_{01},...z_{0i}]^{\text{T}}\left(i=1,2,3\right)$, where its component is expressed as $z_{0i} = z_{si} - \varepsilon_{si}\tan \theta$. Further, denote its corresponding translated variable calculated by $\mathcal{R}(\cdot)$ as $\boldsymbol{\varepsilon}_{0} = [\varepsilon_{01},...\varepsilon_{0i}]^{\text{T}}\left(i=1,2,3\right)$ such that $\varepsilon_{0i} = \mathcal{R}\left(z_{0i}\right)$, we have the following result:
\begin{equation}
	\begin{bmatrix}
		z_{si} \\ \varepsilon_{si}
	\end{bmatrix} = 
	\begin{bmatrix}
		1 & \tan\theta \\
		0 & 1
	\end{bmatrix}
	\begin{bmatrix}
		z_{0i} \\ \varepsilon_{0i}
	\end{bmatrix}
\end{equation}

It can be observed that the explicit expression of $\mathcal{R}_{s}\left(\cdot\right)$ is hard to obtain, which indicates that the value of the translated variable $\varepsilon_{si}$ cannot be obtained directly. Nevertheless. note that the shear mapping will only squeeze the original function graph horizontally. Owing to this characteristic,  $\label{EQUAL} \varepsilon_{si} = \varepsilon_{0i}$ will be always hold.
This indicates us that the value of $\varepsilon_{si}$ can be obtained by calculating the value of $\varepsilon_{0i}$ instead. 

In summary, the main procedure of the calculation of the proposed SMETF can be stated as follow: 

$\boldsymbol{1}$.For any $z_{si}\in\left(-\infty,+\infty\right)$ calculated by the real-time system error state $e_{i}$ and RPF $\rho_{i}$ as $z_{si}\left(t\right) = e_i\left(t\right)/\rho_{i}\left(t\right)$, solve the following equation:
\begin{equation}
	\label{solve}
	z_{0i} + \mathcal{R}\left(z_{0i}\right)\tan\theta - z_{si} = 0
\end{equation}
Note that the value of $z_{si}$ is a known value for each control period and $z_{0i}$ is the variable to be solved. 

$\boldsymbol{2}$.Since the explicit expression of $\varepsilon_{0i}$ with respect to $z_{0i}$ is available, the value of $\varepsilon_{0i}$ can be calculated directly as $\varepsilon_{0i}=\mathcal{R}\left(z_{0i}\right)$. As a result, the value of $\varepsilon_{si}$ can be attained in this way.
A sketch map of this procedure is illustrated in Figure [\ref{figusm}].

\begin{figure}[hbt!]
	\centering 
	\includegraphics[scale=0.4]{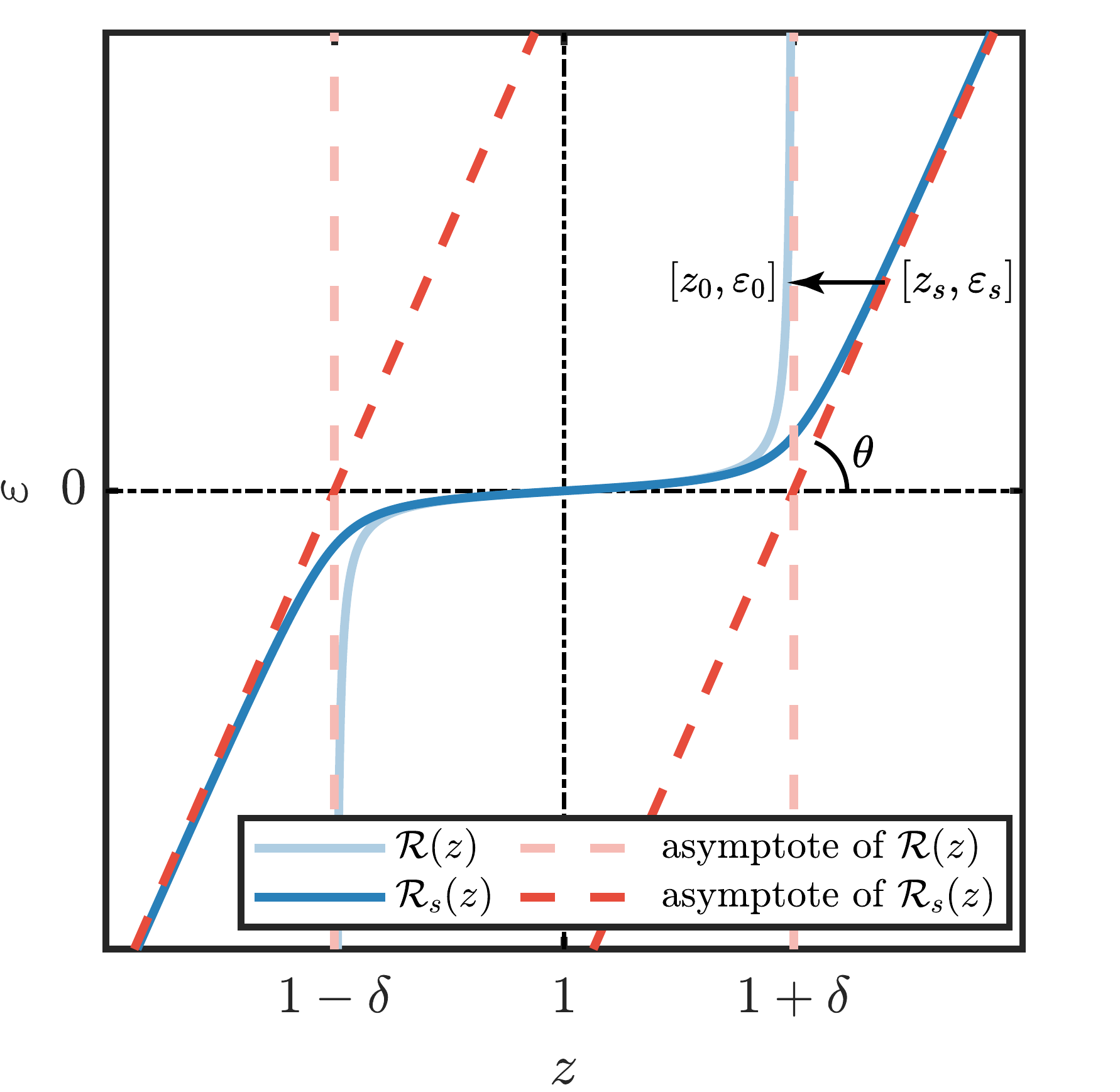}  \caption{Calculation Process of SMETF}  
	\label{figusm}
\end{figure}
In this paper, we choose the tangent function as the basic homeomorphic PPC error transformation function $\mathcal{R}\left(\cdot\right)$. The selected function $\mathcal{R}\left(\cdot\right)$ is stated as follows:
\begin{equation}
	\label{chosenfunction}
	\mathcal{R}\left(z_{0i}\right) = \tan\left[\frac{\pi}{2a}\left(z_{0i}-b\right)\right]
\end{equation}
where $a$, $b$ are the coefficients that needs designing. To cope with the state constraint elaborated in section IV.\ref{Constraint}, the coefficients are given as $a = \delta_{i}(t)$, $b = 1$. 
 Take the time-derivative of $\varepsilon_{si}$, we have:
\begin{equation}
	\label{derivativeofE}
	\dot{\varepsilon}_{si} = 
	\frac{\partial \varepsilon_{si}}{\partial z_{si}}\dot{z}_{si} + 
	\frac{\partial \varepsilon_{si}}{\partial \delta_{i}} \dot{\delta}_{i}
\end{equation}

Define $\mathcal{P}_{si} = \partial\varepsilon_{si}/\partial z_{si}\left(i=1,2,3\right)$ and $\mathcal{P}_{0i} = \partial\varepsilon_{0i}/\partial z_{0i}\left(i=1,2,3\right)$. Consider the first term in equation (\ref{derivativeofE}), by combining it with equation $\left(\ref{solve}\right)$, we have:
\begin{equation}
	\label{P0PS}
	\mathcal{P}_{si} = \frac{\mathcal{P}_{0i}}{1 + \mathcal{P}_{0i}\tan\theta}
\end{equation}
According to the chosen basic error transformation function $\mathcal{R}\left(\cdot\right)$ as (\ref{chosenfunction}), for any calculated $z_{0i}$, $\mathcal{P}_{0i}$ can be expressed as follows:
\begin{equation}
	\begin{aligned}
			\label{P0i}
		\mathcal{P}_{0i} &= \frac{\pi}{2\delta_{i}}\sec^{2}\left(\frac{\pi}{2\delta_{i}}\left(z_{0i}-1\right)\right) \\
		& = \frac{\pi}{2\delta_{i}}\left[\tan^{2}\left(\frac{\pi}{2\delta_{i}}\left(z_{0i}-1\right)\right)+1\right] \\
		&= \frac{\pi}{2\delta_{i}}\left(\varepsilon^{2}_{0i} +1\right) = \frac{\pi}{2\delta_{i}}\left(\varepsilon^{2}_{si} +1\right)
	\end{aligned}
\end{equation}
Substituting equation (\ref{P0i}) into (\ref{P0PS}), we have the following result:
\begin{equation}
	\label{PS}
	\mathcal{P}_{si} = \frac{\pi\left(\varepsilon^{2}_{si}+1\right)}{\pi\left(\varepsilon^{2}_{si}+1\right)\tan\theta + 2\delta_{i}}
\end{equation}

Consider about the second term in the equation \ref{derivativeofE} expressed as $\frac{\partial \varepsilon_{si}}{\partial \delta_{i}} \dot{\delta}_{i}$, since $\varepsilon_{si} = \varepsilon_{0i}$ will be always holds, we have the following property:
\begin{equation}
	\frac{\partial \varepsilon_{si}}{\partial \delta_{i}}\dot{\delta}_{i} 
	= \frac{\partial \varepsilon_{0i}}{\partial \delta_{i}}\dot{\delta}_{i}  = 
	\frac{\pi}{2}\left(\varepsilon^2_{si}+1\right) \left(z_{0i}-1\right)\left(-\frac{1}{\delta_{i}^{2}}\right)\dot{\delta}_{i} 
\end{equation}
where $\varepsilon_{0i}$ and $z_{0i}$ are the aforementioned corresponding original image of $\varepsilon_{si}$ and $z_{si}$ respectively such that $z_{0i} + \varepsilon_{0i}\tan\theta = z_{si}$. According to the aforementioned analysis in section IV.\ref{Constraint}, we have $\dot{\delta}_{i} \ge 0$. Note that for $z_{0i} \ge 1$, we have $\varepsilon_{si} \ge 0$, and for $z_{0i} \le 1$, we have $\varepsilon_{si} \le 0$ . Thus, we have the following conclusion:
\begin{equation}\label{SupControl}
	\begin{aligned}
		\begin{split}		
			\frac{\partial \varepsilon_{si}}{\partial \delta_{i}}\dot{\delta}_{i} \left \{	
			\begin{array}{ll}
				\ge 0,  & if z_{0i} \le 1\\				
				< 0, & if z_{0i} > 1
			\end{array}				
			\right.		
		\end{split}
	\end{aligned}
\end{equation}
By sort out these above results, the time-derivative of $\varepsilon_{si}$ can be rearranged as follows:
\begin{equation}\label{dEpsilon}
	\begin{aligned}
		\dot{\varepsilon}_{si} &= 
		\frac{\partial \varepsilon_{si}}{\partial z_{si}}\dot{z}_{si}+\xi_{si} \\
		&= \frac{\partial \varepsilon_{si}}{\partial z_{si}}\frac{\dot{e}_{i}\rho_{i} - e_{i}\dot{\rho}_{i}}{\rho^{2}_{i}} + \xi_{si} \\
		&= \psi_{si}\left(\dot{e}_{i} + \eta_{si}e_{i}\right) + \xi_{si}
	\end{aligned}
\end{equation}

where $\psi_{si} = \frac{1}{\rho_{i}}\mathcal{P}_{si}$, $\eta_{si}  = -\dot{\rho_{i}}/\rho_i$, $\xi_{si} = \frac{\partial \varepsilon_{si}}{\partial \delta_{i}}\dot{\delta}_{i}$.

\subsection{Analysis on SAPPC and Traditional PPC scheme}
\label{Comparison}

This section presents a through theoretical analysis on the effect between the proposed SAPPC scheme and the traditional PPC scheme. 
As stated in \cite{wei_2021_overview,hu_adaptive_2018}, the typical traditional PPC schemes can be classified into two types, and we name them as the "Homeomorhphic error transformation type PPC" and the "BLF type PPC" in this paper. 
The main characteristics of these two types of PPC error transformation procedure are listed in Table [\ref{TWOPPCw}] as below. 


\begin{table}[hbt!]
	
	\centering
	\begin{tabular}{|l|c|}
		\hline
		\multicolumn{1}{|c|}{Homeomorphic Type}                                       & BLF Type                                                                                                       \\ \hline
		$\varepsilon_{i} = \ln\left(\frac{K+z_{si}}{(K)\left(1-z_{si}\right)}\right)$ & \multirow{2}{*}{$\varepsilon_{i} = \frac{2e_{i} - \left(\rho_{ui} + \rho_{li}\right)}{\rho_{ui} - \rho_{li}}$} \\ \cline{1-1}
		$\varepsilon_{i} = \ln\left(\frac{K\left(1+z_{si}\right)}{K-z_{si}}\right)$   &                                                                                                                \\ \hline
		$-K\rho_{i}\left(t\right) < e_{i}\left(t\right) < \rho_{i}\left(t\right) $    & \multicolumn{1}{l|}{\multirow{2}{*}{$\rho_{li}\left(t\right) < e(t) < \rho_{ui}\left(t\right)$}}               \\ \cline{1-1}
		$-\rho_{i}\left(t\right) < e_{i}\left(t\right) < K\rho_{i}\left(t\right) $    & \multicolumn{1}{l|}{}                                                                                          \\ \hline
	\end{tabular}
\caption{Characteristic of Traditional PPC Error Transformation }
\label{TWOPPCw}
\end{table}

where $K\in\left(0,1\right)$ is a constant to be designed. The first row presents the error transformation of each scheme, while the second row presents the corresponding state constraint.

$\textbf{1. The singularity Problem}$

$\mathbf{1.}$ For the homeomorphic PPC error transformation,
note that these following properties will be satisfied for $e_{i}\left(0\right) \ge 0$:
\begin{equation}
	\begin{aligned}
				\lim_{z_{si}\to 1}\mathcal{R}\left(z_{si}\right) &= +\infty  \\
				\lim_{z_{si}\to -K}\mathcal{R}\left(z_{si}\right) &= -\infty		
	\end{aligned}
\end{equation}
This property indicates us that the definition domain of the homeomorphic type error transformation function is restricted in a belt-like region enclosed by two vertical asymptotes, thus the homeomorphic type error transformation function will be meaningless when $z_{si} \notin \left(-K,1\right)$. Therefore, the controller will trap into the singularity problem under such a condition.

$\mathbf{2}.$ For the BLF type PPC controller, it's state constraint is realized by the Barrier Lyapunov Function. Consider a typcial Barrier Lyapunov Function $V_{\text{BLF}}$ expressed as follows:
\begin{equation}
	V_{BLF} = \frac{k_{0}}{2}\left(\frac{1}{1-\varepsilon_{i}^{2}}\right)
\end{equation}
Take the time-derivative of $V_{BLF}$, we have:
\begin{equation}
	\dot{V}_{BLF} = \frac{k_{0}}{2}\left(\frac{1}{1-\varepsilon_{i}^{2}}\right)^{2}\left(2\varepsilon_{i}\dot{\varepsilon_{i}}\right)
\end{equation}
Consider about the condition that the state trajectory is out of the constraint region, i.e., $\varepsilon_{i} > 1$ or $\varepsilon_{i} < -1$, $V_{\text{BLF}} < 0$ will be hold, which indicates that the barrier lyapunov function is meaningless.
From another aspect, to guarantee the convergence of $\varepsilon_{i}$ under such a condition, $\varepsilon_{i}\cdot\dot{\varepsilon}_{i} < 0$ should be satisfied. Therefore, $\dot{V}_{\text{BLF}} < 0$ is hold for $V_{\text{BLF}} < 0$. This means that once the state trajectory is out of the constraint region, the BLF PPC-based controller will not able to lead the state trajectory back to the constraint region anymore. This phenomenon will be detailed and supported by the numerical simulation result later in the section V.\ref{compsimulation}.

\begin{remark}
	Although it is possible to technically guarantee that the state trajectory stays in the constraint region at $t=0$. However, it cannot be ensured that the state trajectory will stay in the constraint region during the whole control process, especially when there exists big external disturbances. This situation will be simulated at section V.\ref{compsimulation} for the detailed elaboration.
\end{remark}

\textbf{2. Chattering Caused by over-control problem} \\

For the convenience of the following analysis, we define the lower boundary and the upper boundary of the auxiliary variable $z_{si}$ as $U_{-}$ and $U_{+}$ respectively, i.e. the state constraint can be  expressed as $U_{-} < z_{si} < U_{+}$.

\begin{figure}[hbt!]
	\centering  
	\includegraphics[scale = 0.3]{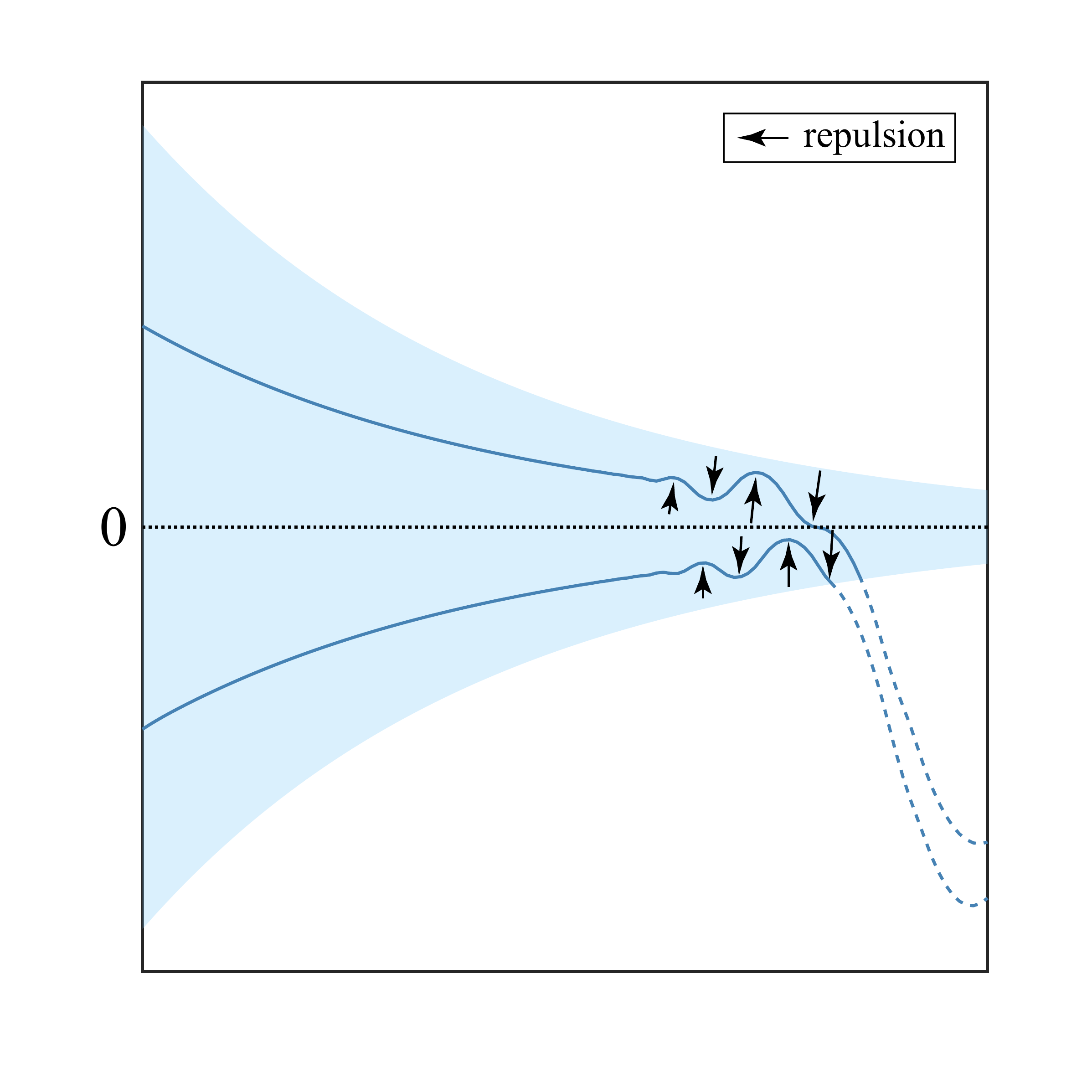}  \caption{The sketch map of the Chattering problem in Traditional PPC. Blue region denotes the constraint region, the arrow denotes the repulsive force exerted by the constraint boundary} 
	\label{fig_chat} 
\end{figure}

Accordingly, the controller will calculate a tremendous output when $z_{si} \to U_{+}$ or $U_{-}$ is hold. Due to this over control problem, the state trajectory will deviate from the desired path. 
Suppose the terminal value of the performance function is small at the steady-state. In that case, the constraint region will be narrow, and the state trajectory may alternately battered by the strong repulsive forces exerted by the constraint boundary. This will cause the state trajectory chattering intensely, as illustrated in Figure [\ref{fig_chat}].

\textbf{3. Contradiction between control accuracy and stability}

If the terminal value of the performance function is set to a small value, the constraint may be too tight for the auxiliary variable. This will cause the chattering problem as stated in the last subsection.

Oppositely, if the terminal value of the performance function is set to a bigger value, the constraint region will be wider. However, this will make the constraint strength too loose, which will result in a decrease in constraint ability. As a result, there is a contradiction between the system stabilization and control accuracy, limiting the effect of these two kinds of benchmark controllers. This will be validated in the numerical section. Due to this reason, we are not able to set the terminal value as small as we want.

\begin{remark}
	\label{remarkref}
	The steep constraint boundary for the state trajectory is the main sources of the PPC scheme's state constraint ability: when the state trajectory approaches the boundary of the constraint region as $z_{si}\to U_{+}$ or $z_{si}\to U_{-}$, the PPC-based controller will calculate a tremendous controller output to restrain the state responding trajectory, forcing it to stay in the constrained region.
	
	In this aspect, the steep constraint boundary exists in these error transformations cannot be removed directly.
\end{remark}

$\textbf{4. Analysis on SAPPC}$

As for the proposed SAPPC scheme, due to the utilization of the shear mapping, the graph of the original homeomorphic error transformation function will be tilted with a specific inclination degree decided by $\theta$. 
Accordingly, the original vertical asymptotes $z_{si} = U_{+}$ , $z_{si} = U_{-}$ will be tilted into two new asymptotes expressed as $tan\theta\left(z_{si}-U_{+}\right)$ and $tan\theta\left(z_{si}-U_{-}\right)$, shown as Figure [\ref{figusm}] In this way, the definition domain constraint is $\mathbf{loosed}$ by the shear mapping. Therefore, there exists no explicit boundary of the $z_{si}$, which ensures that the error transformation is able to perform gloablly.

Further, note that if $\theta = 0$, the SMETF $\mathcal{R}_{s}\left(\cdot\right)$ will be equivalent to the original homeomorphic error transformation function $\mathcal{R}\left(\cdot\right)$. By choosing an appropriate parameter $\theta$, the SAPPC-based controller will able to exert big enough constraint strength to the state trajectory.

\begin{remark}
	The effect of the shear mapping will actually loose the original constraint boundary in a rational way, and the angular parameter $\theta$ plays a significant role in this procedure. Although the smaller $\theta$ is, the stronger the constraint is, we find $\theta = 10^{\circ}$ is enough for the state trajectory constraint.
\end{remark}


	\subsection{Control Law Derivation}
\label{ControlLaw}
Like the traditional PPC scheme, the proposed SAPPC scheme can combine with various kinds of control methodologies to realize a specific controller. This subsection develops a backstepping controller based on the translated variable of attitude error quaternion $\boldsymbol{q}_{ev}$ by utilizing the proposed SAPPC scheme and the predefined-time stability theory. Besides, the dynamic surface control method is also employed to help alleviate the “differential explosion" problem. The system structure of the developed controller is illustrated in Figure [\ref{figSD}].

\begin{figure}[hbt!]
	\centering 
	\includegraphics[scale=0.017]{}  \caption{Control System Diagram}  
	\label{figSD}  
\end{figure}

For further analyzing, the auxiliary variable of the attitude error quaternion's vector part is defined as $\boldsymbol{z}_{q} = \left[z_{q1},...z_{qi}\right]^{\text{T}}\left(i=1,2,3\right)$, where each component $z_{qi}$ is calculated by $z_{qi} = q_{evi}/\rho_{i}$. $\rho_{i}$ denotes the RPF trajectory that needs tracking, whose symbol is determined by the corresponding initial condition.  The translated variable of $\boldsymbol{z}_{q}$ is defined as $\boldsymbol{\varepsilon}_{q} = \left[\varepsilon_{q1},...\varepsilon_{qi}\right]\left(i=1,2,3\right)$, while each element of $\boldsymbol{\varepsilon_{q}}$ is given by the SMETF as $\varepsilon_{qi} = \mathcal{R}_{s}\left(z_{qi}\right)$.

According to the equation (\ref{derivativeofE}), we have the following result:
\begin{equation}
	\dot{\varepsilon}_{qi} =  \frac{1}{\rho_{i}}\frac{\partial \varepsilon_{qi}}{\partial z_{qi}} \left[\dot{q}_{evi} - \frac{\dot{\rho}_{i}}{\rho_{i}}q_{evi} \right] + \frac{\partial \varepsilon_{qi}}{\partial \delta_{i}}\dot{\delta}_{i}
\end{equation}

Define a diagonal matrix as $\boldsymbol{\psi}_{q}\in \mathbb{R}^{3\times 3}$, where $\psi_{qi} =  \frac{1}{\rho_{i}}\frac{\partial \varepsilon_{qi}}{\partial z_{qi}} \left(i=1,2,3\right)$.
 Similarly, define a diagonal matrix as $\boldsymbol{\eta}_{q}\in\mathbb{R}^{3 \times3}$ and a column vector as $\boldsymbol{\xi}_{q} = \text{vec}\left(\xi_{q1},\xi_{q2},\xi_{q3}\right)\in\mathbb{R}^{3}$, where $\eta_{qi} = \frac{-\dot{\rho}_{i}}{\rho_{i}}  \left(i = 1,2,3\right)$ and $\xi_{qi} = \frac{\partial \varepsilon_{qi}}{\partial \delta_{i}}\dot{\delta}_{i} \left(i = 1,2,3\right)$ respectively.
The equation can be rearranged into a matrix form, with the derivative of $\dot{\boldsymbol{\varepsilon}}_{q}$ can be expressed as:
\begin{equation}\label{dEpsilon}
	\dot{\boldsymbol{\varepsilon}}_{q}=\boldsymbol{\psi}_{q}\left(\dot{\boldsymbol{q}}_{e} + \boldsymbol{\eta}_q\boldsymbol{q}_e\right) + \boldsymbol{\xi}_q
\end{equation}

View the ideal error angular velocity as the virtual control law, denoted as $\boldsymbol{\alpha}$. For further analysis, we define three error subsystem as $\boldsymbol{z}_{1} = \boldsymbol{\varepsilon}_{q}$, $\boldsymbol{z}_{2} = \boldsymbol{\omega}_{e} - \boldsymbol{S}_d$ and $\label{Hd2} \boldsymbol{H}_{d} = \boldsymbol{S}_{d} - \boldsymbol{\alpha}$, where $\boldsymbol{S}_{d}$ represents the output of the filter. The $i$ th component of $\boldsymbol{z}_{1}$, $\boldsymbol{z}_{2}$, $\boldsymbol{H}_{d}$ are denoted as $z_{1i}$, $z_{2i}$ and $H_{di}$ in the following analysis, respectively.

$\mathbf{Step 1.}$ To guarantee that the translated variable $\boldsymbol{\varepsilon}_{q}$ is able to converge, choose a candidate Lyapunov Function $V_{1}$ as follows.
\begin{equation}
	V_1 = \frac{1}{2}\boldsymbol{z}_{1}^{\text{T}}\boldsymbol{z}_{1}
\end{equation}

Take the time-derivative of $V_{1}$ yields:
\begin{equation}\label{dVdV}
	\dot{V}_{1} = \boldsymbol{z}_{1}^{\text{T}}\dot{\boldsymbol{z}}_{1} = \boldsymbol{\varepsilon}^\text{T}_q \dot{\boldsymbol{\varepsilon}}_q
\end{equation}
Substituting (\ref{dEpsilon}) into (\ref{dVdV}), we can have:
\begin{equation}
	\label{dV1}
	\dot{V}_{1} = \boldsymbol{\varepsilon}^\text{T}_q\left[\boldsymbol{\psi}_{q}\left(\dot{\boldsymbol{q}}_{ev} + \boldsymbol{\eta}_q\boldsymbol{q}_{ev}\right) + \boldsymbol{\xi}_q\right]
\end{equation}

To ensure that $V_{1}$ will converge to the steady-state in a predefined-time, the ideal virtual control law $\boldsymbol{\alpha}$ is designed as follows:
\begin{equation}\label{virtualcontrol}
	\begin{aligned}
		\boldsymbol{\alpha} = \boldsymbol{\varGamma}^{-1}
		\left[-\boldsymbol{\psi}_q^{-1}\boldsymbol{M}_{q}\boldsymbol{K}_{q}\boldsymbol{\varepsilon}_q - \boldsymbol{\eta}_q\boldsymbol{q}_{ev}
		\right]
	\end{aligned}
\end{equation}
where $\boldsymbol{\varGamma}\left(\boldsymbol{q}_{e}\right)$ is expressed as $\boldsymbol{\varGamma}$ for brevity, $\boldsymbol{M}_{q}$, $\boldsymbol{K}_{q}$ are the diagonal matrix defined as follows:
\begin{equation}
	\begin{aligned}
		\boldsymbol{M}_{q} &= \frac{1}{2p_{1}T_{1}}e^{V_{1}^{p_{1}}}V_{1}^{-p_{1}}\cdot\boldsymbol{I_{3 \times 3}} \\
		\boldsymbol{K}_{q} &= K_{q}\cdot \boldsymbol{I_{3\times3}}
	\end{aligned}
\end{equation}

where $K_{q}$ represents the controller gain of the first layer, $p_{1}\in(0,1)$ and $T_{1} > 0$ are the coefficients to be designed.
Substituting (\ref{virtualcontrol}) into (\ref{dV1}), the expression can be reformed as follows:
\begin{equation}
	\dot{V}_{1} = -\boldsymbol{\varepsilon}^{\text{T}}_{q}\boldsymbol{M}_{q}\boldsymbol{K}_{q}\boldsymbol{\varepsilon}_{q} + \boldsymbol{\varepsilon}^{\text{T}}_{q}\boldsymbol{\xi}_{q}
\end{equation}
Applying the conclusion in (\ref{SupControl}), we can yield:
\begin{equation}
	\varepsilon_{qi}
	\begin{aligned}
		\begin{split}		
			\frac{\partial \varepsilon_{si}}{\partial \delta_{i}}\dot{\delta}_{i} \left \{	
			\begin{array}{ll}
				\le 0,  & since \quad z_{qi} \le 1, \varepsilon_{qi} \le 0          \\				
				\le 0, & since \quad z_{qi} \ge 1, \varepsilon_{qi} \ge 0  
			\end{array}				
			\right.		
		\end{split}
	\end{aligned}
\end{equation}	 
we can find that $ \boldsymbol{\varepsilon}^\text{T}_q \boldsymbol{\xi}_q \le 0$ will be always hold. Therefore, we have:
\begin{equation}
	\dot{V}_{1} \le -\boldsymbol{\varepsilon}^{\text{T}}_{q}\boldsymbol{M}_{q}\boldsymbol{K}_{q}\boldsymbol{\varepsilon}_{q} = -\frac{K_{q}}{p_{1}T_{1}}e^{V_{1}^{p_{1}}}V_{1}^{1-p_{1}}
\end{equation}
\begin{remark}
	To ensure that the Jacobian matrix $\boldsymbol{\varGamma}\left(\boldsymbol{q}_{e}\right)$ is invertible, the determination of $\boldsymbol{\varGamma}$ should not be zero. Note that $\text{Det}\left(\boldsymbol{2\varGamma}\right) = 2\cdot q_{e0} $, thus, $q_{e0}\left(t\right) \neq 0$ should be always satisfied. Since $ q_{e0} = 0$ means the attitude system is totally diverged, it is rational to indicate that the $ q_{e0}\left(t\right) \neq 0$ will be satisfied if we guarantee that the initial condition $ q_{e0}\left(0\right) \neq 0$ is satisfied.
\end{remark}

$\mathbf{Step 2.}$ In this step, the final control law is derived to ensure that the control error $\boldsymbol{z}_{2}$ will converge to a small enough residual set.
Consider the aforementioned $\boldsymbol{\omega}_{e}$ layer subsystem (\ref{errorsystem}), take the time-derivative of $\boldsymbol{z_{2}}$, we have:
\begin{equation}
	\begin{aligned}
		\label{dV2}
		\boldsymbol{J}\dot{\boldsymbol{z}_{2}} &= 	\boldsymbol{J}\left(\dot{\boldsymbol{\omega}}_{e} - 	\dot{\boldsymbol{S}}_{d}\right) \\
		& =  \boldsymbol{J}\boldsymbol{\omega}^{\times}_e\boldsymbol{C}_e\boldsymbol{\omega}_d 
		- \boldsymbol{J}\boldsymbol{C}_e\dot{\boldsymbol{\omega}}_d
		-\boldsymbol{\omega}_s^{\times}\boldsymbol{J}\boldsymbol{\omega}_s - \boldsymbol{J}\dot{\boldsymbol{S}}_{d} \\
		&\quad + \boldsymbol{u} + \boldsymbol{d}
	\end{aligned}
\end{equation}
Choose a candidate Lyapunov function as $V_{2} = \frac{1}{2}\boldsymbol{z}_{2}^{\text{T}}\boldsymbol{J}\boldsymbol{z}_{2}$.
To realize the fast tracking of the virtual control law $\boldsymbol{\alpha}$, the final control law is derived as follows:
\begin{equation}
	\begin{aligned}
			\label{U}
		\boldsymbol{u} &= -\boldsymbol{W}_{0} + \boldsymbol{J}\dot{\boldsymbol{S}_{d}} - 
		\boldsymbol{D}_{m}\text{vec}\left(\tanh\left(\frac{z_{2i}}{\mu_{i}}\right)\right)\\
		&\quad -\boldsymbol{M}_{\omega}\boldsymbol{K}_{\omega}\boldsymbol{J}\boldsymbol{z}_{2}
	\end{aligned}
\end{equation}
where $\boldsymbol{W}_{0} = \boldsymbol{J}\boldsymbol{\omega}^{\times}_e\boldsymbol{C}_e\boldsymbol{\omega}_d 
- \boldsymbol{J}\boldsymbol{C}_e\dot{\boldsymbol{\omega}}_d
-\boldsymbol{\omega}_s^{\times}\boldsymbol{J}\boldsymbol{\omega}_s$ denotes the dynamical terms in the equation. $\boldsymbol{D}_{m}$ represents a diagonal matrix consisted by the known upper boundary of the external disturbances, such that $\boldsymbol{D}_{m} = \text{diag}\left(D_{m} , D_{m}, D_{m}\right)$. $\mu_{i}\left(i=1,2,3\right)$ are constants to be designed, $\boldsymbol{M}_{\omega}$ and $\boldsymbol{K}_{\omega}$ are the diagonal matrix defined as follows:
\begin{equation}
	\begin{aligned}
		\boldsymbol{M}_{\omega} &= \frac{1}{2p_{2}T_{2}}e^{V_{2}^{p_{2}}}V_{2}^{-p_{2}} \cdot\boldsymbol{I}_{3 \times 3}\\
		\boldsymbol{K}_{\omega} &= K_{\omega}\cdot \boldsymbol{I_{3\times 3}}
	\end{aligned}
\end{equation}
where $p_{2} \in \left(0,1\right)$, $T_{2} > 0$ are coefficients to be designed later, $K_{\omega}$ represents the controller gain of the second layer system.
Take the time-derivative of $V_{2}$ and substituting (\ref{dV2}) and (\ref{U}) into the expression yields:
\begin{equation}
	\begin{aligned}
		\dot{V}_{2} &= \boldsymbol{z}_{2}^{\text{T}}\boldsymbol{J}\dot{\boldsymbol{z}}_{2}\\
		&= -\boldsymbol{z}_{2}^{\text{T}}\boldsymbol{M}_{\omega}\boldsymbol{J}\boldsymbol{K}_{\omega}\boldsymbol{z}_{2} + \boldsymbol{z}_{2}^{\text{T}}\boldsymbol{d} \\
		&\quad - \sum_{i=1}^{3}D_{\text{m}}z_{2i}\tanh\left(\frac{z_{2i}}{\mu_{i}}\right)\\
		& \le  -\frac{K_{\omega}J_{\text{min}}}{p_{2}T_{2}}e^{V_{2}^{p_{2}}}V_{2}^{1-p_{2}}
		+ \boldsymbol{z}_{2}^{\text{T}}\boldsymbol{d} \\
		&\quad - \sum_{i=1}^{3}D_{\text{m}}z_{2i}\tanh\left(\frac{z_{2i}}{\mu_{i}}\right)
	\end{aligned}
\end{equation}
Define $K_{2} = K_{\omega}J_{\text{min}}$, by applying the Lemma [\ref{lemma_tanh}] and assumption [\ref{assump_J}], we have the following result:
\begin{equation}
	\begin{aligned}
		\dot{V}_{2} &\le	 -\frac{K_{2}}{p_{2}T_{2}}e^{V_{2}^{p_{2}}}V_{2}^{1-p_{2}}  \\
		&\quad +
		D_{\text{m}}\sum_{i=1}^{3}\left[|z_{2i}| - z_{2i}\tanh\left(\frac{z_{2i}}{\mu_{i}}\right)\right] \\
		&\le -\frac{K_{2}}{p_{2}T_{2}}e^{V_{2}^{p_{2}}}V_{2}^{1-p_{2}} +
		0.2785D_{\text{m}}\sum_{i=1}^{3}\mu_{i}
	\end{aligned}    
\end{equation}

$\mathbf{Step 3.}$ In this step, we employ the dynamic surface control technique to built a filter for the $\dot{\boldsymbol{\alpha}}$, observing value of the virtual control law $\boldsymbol{\alpha}$. Choose a candidate Lyapunov function as $V_{3} = \frac{1}{2}\boldsymbol{H}_{d}^{\text{T}}\boldsymbol{H}_{d}$. Take the time-derivative of $V_{3}$, we have:
\begin{equation}
	\label{dV3}
	\dot{V}_{3} = \boldsymbol{H}_{d}^{\text{T}}\left(\dot{\boldsymbol{S}_{d}} - \dot{\boldsymbol{\alpha}}\right)
\end{equation}
the filter is designed as follows:
\begin{equation}
	\label{dSd}
	\dot{\boldsymbol{S}}_{d} = -\frac{1}{2p_{3}T_{3}}e^{V_{3}^{p_{3}}}V_{3}^{-p_{3}}\cdot \boldsymbol{I_{3\times3}}\cdot\boldsymbol{H}_{d}
\end{equation}
where $p_{3} \in \left(0,1\right)$ and $T_{3}$ are coefficients to be designed.
Substituting (\ref{dSd}) into (\ref{dV3}), one can be obtained that:
\begin{equation}
	\label{dV_3}
	\begin{aligned}
		\dot{V}_{3} &= \boldsymbol{H}_{d}^{\text{T}}\left[-\frac{1}{2p_{3}T_{3}}e^{V_{3}^{p_{3}}}V_{3}^{-p_{3}}\cdot\boldsymbol{I_{3\times3}}\boldsymbol{H}_{d}  - \dot{\boldsymbol{\alpha}}\right]\\
		&= -\frac{1}{2p_{3}T_{3}}e^{V_{3}^{p_{3}}}V_{3}^{1-p_{3}} + \boldsymbol{H}_{d}^{T}\left( - \dot{\boldsymbol{\alpha}}\right)\\
		&\le -\frac{1}{2p_{3}T_{3}}e^{V_{3}^{p_{3}}}V_{3}^{1-p_{3}} + 
		\|\boldsymbol{H}_{d}\| \|\dot{\boldsymbol{\alpha}}\|
	\end{aligned}
\end{equation}
By applying the Young's inequalities, note that:
\begin{equation}
	\begin{aligned}
		\|\boldsymbol{H}_{d}\|\| \dot{\boldsymbol{\alpha}}\|
		&\le \frac{1}{2}\|\boldsymbol{H}_{d}\|^{2} + \frac{1}{2}\|\dot{\boldsymbol{\alpha}}\|^{2} = V_{3}+ \frac{1}{2}\|\dot{\boldsymbol{\alpha}}\|^{2} \\
	\end{aligned}
\end{equation}
Note that $\dot{\boldsymbol{\alpha}}$ is a smooth continuous vector function all along. Consider the expression of $\boldsymbol{\alpha}$, according to (\ref{virtualcontrol}), 
take the time-derivative of $\boldsymbol{\alpha}$, we have:
\begin{equation}
	\begin{aligned}
		\dot{\boldsymbol{\alpha}} &= 
		\dot{\boldsymbol{\Gamma}^{-1}} \left[-\boldsymbol{\psi}_q^{-1}\boldsymbol{M}_{q}\boldsymbol{K}_{q}\boldsymbol{\varepsilon}_q - \boldsymbol{\eta}_q\boldsymbol{q}_{ev}
		\right] \\
		&\quad  
		-\boldsymbol{\Gamma}^{-1}\dot{\boldsymbol{\psi}^{-1}_{q}}\boldsymbol{M}_{q}\boldsymbol{K}_{q}\boldsymbol{\varepsilon}_{q} - \boldsymbol{\Gamma}^{-1}\boldsymbol{\psi}^{-1}_{q}\dot{\boldsymbol{M}_{q}}\boldsymbol{K}_{q}\boldsymbol{\varepsilon}_{q} \\
		&\quad -
		\boldsymbol{\Gamma}^{-1}\boldsymbol{\psi}^{-1}_{q}\boldsymbol{M}_{q}\boldsymbol{K}_{q}\dot{\boldsymbol{\varepsilon}}_{q}
		-\boldsymbol{\Gamma}^{-1}\dot{\boldsymbol{\eta}_{q}}\boldsymbol{q}_{ev} \\
		&\quad - \boldsymbol{\Gamma}^{-1}\boldsymbol{\eta}_{q}\dot{\boldsymbol{q}}_{ev}
	\end{aligned}
\end{equation}

For each part in $\dot{\boldsymbol{\alpha}}$, it is all bounded variable, which indicates that there exists a maxima of $\|\dot{\boldsymbol{\alpha}}\|$ during the whole control procedure. Further, we define $\boldsymbol{B}_{d} = \dot{\boldsymbol{\alpha}}$ for convenient.
Choose a candidate Lyapunov Function as $V = V_{1} + V_{2} + V_{3}$, take the time-derivative of $V$ yields:
\begin{equation}
	\begin{aligned}
		\dot{V} &= \dot{V_{1}}+\dot{V_{2}}+\dot{V_{3}}\\
		&\le -\frac{K_{q}}{p_{1}T_{1}}e^{V_{1}^{p_{1}}}V_{1}^{1-p_{1}}-
		\frac{K_{2}}{p_{2}T_{2}}e^{V_{2}^{p_{2}}}V_{2}^{1-p_{2}}\\
		&\quad-
		\frac{1}{p_{3}T_{3}}e^{V_{3}^{p_{3}}}V_{3}^{1-p_{3}}\\
		& \quad + 0.2785D_{\text{m}}\sum_{i=1}^{3}\mu_{i} + V_{3} + \frac{1}{2}\|\boldsymbol{B}_{d}\|^{2}
	\end{aligned}
\end{equation}
Since $\mu_{i}$ are the design parameter to be selected, so it is a constant during the whole procedure. Let $C_{0} = 0.2785D_{\text{m}}\sum_{i=1}^{3}\mu_{i}$ for convenient. Applying Lemma [\ref{lemma_last}], since $e^{x^{p}}x^{1-p} - x \le 0$ is always holds on $x\in\left(0,+
\infty\right)$, we have $V_{3} \le e^{V_{3}^{p_{3}}}V_{3}^{1-p_{3}}$, hence the expression can be rearranged into:
\begin{equation}
	\begin{aligned}
		\dot{V} 
		&\le -\frac{K_{q}}{p_{1}T_{1}}e^{V_{1}^{p_{1}}}V_{1}^{1-p_{1}}-
		\frac{K_{2}}{p_{2}T_{2}}e^{V_{2}^{p_{2}}}V_{2}^{1-p_{2}}\\
		&\quad -
		\left(\frac{1}{p_{3}T_{3}}-1\right)e^{V_{3}^{p_{3}}}V_{3}^{1-p_{3}}\\
		&\quad + C_{0}  + \frac{1}{2}\|\boldsymbol{B}_{d}\|^{2}
	\end{aligned}
\end{equation}
\begin{assumption}
	Consider a set $\Omega$ as $\Omega = \left\{V\le \mathcal{U} , \forall\mathcal{U} >0 \right\}$, with its boundary $\mathcal{U}$ can be set arbitrarily big. Therefore, for a specific $\mathcal{U}$, $\Omega$ is a compact set. According to assumption [\ref{assump_Qd}], the attitude desired trajectory set is embodied in a compact set as $\Omega_{d} = \left\{\|\boldsymbol{q}_{d}\|^{2} + \|\dot{\boldsymbol{q}_{d}}\|^{2} + \|\ddot{\boldsymbol{q}_{d}}\|^{2} \le \mathcal{U}_{d},\forall \mathcal{U}_{d}> 0\right\}$. Since the upper boundary $\mathcal{U}$ can be made large arbitrarily, it is assumed that $\Omega_{d} \subseteq \Omega$. 
\end{assumption}

Denotes the maxima of $\frac{1}{2}\|\boldsymbol{B}_{d}\|^{2}$ on $\Omega$ as $B_{m}$. Thus we yields:
\begin{equation}
	\begin{aligned}
			\label{dV}
		\dot{V} &\le -M_{a}
		\sum_{i=1}^{3}e^{V_{i}^{p_{i}}}V_{i}^{1-p_{i}}+C_{0}+B_{m}
	\end{aligned}
\end{equation}
where $M_{a}$ is expressed as:
\begin{equation}
	M_{a} = \min\left(\frac{K_{q}}{p_{1}T_{1}},
	\frac{K_{2}}{p_{2}T_{2}},
	\left(\frac{1}{p_{3}T_{3}}-1\right)\right)
\end{equation}
For convenient, we set $p_{1} = p_{2} = p_{3} = p$ in this paper for the following analysis, thus, the expression of $M_{a}$ can be expressed as follows:
\begin{equation}
	M_{a} = \min\left(\frac{K_{q}}{pT_{1}} , \frac{K_{2}}{pT_{2}} , \frac{1}{pT_{3}} -1\right)
\end{equation}
\begin{remark}
	For the synthesize of the controller, $M_{a} > 0$ should be always satisfied, thus $p_{3}T_{3}\in\left(0,1\right)$ should be guaranteed. Consider the dynamic filter system, since it is not a real physical system, so its parameter $T_{3}$ can be set to a far more small value compared with $T_{1}$ and $T_{2}$. This indicates that $M_{a}$ is mostly chosen according to $\frac{K_{q}}{T_{1}}$ or $\frac{K_{2}}{T_{2}}$, not the dynamic surface filter.
\end{remark}

Consider the first term in equation (\ref{dV}), assume that $V_{1}\le V_{2}\le V_{3}$. consider the function $f\left(x\right) = e^{x^p}\left(0<p<1\right)$, note we have $f(V_1)\le f(V_2) \le f(V_3)$. Similarly, consider the function $h\left(x\right) = x^{1-p}$, we have $h(V_1) \le h(V_2) \le h(V_3)$. The proof of this property is detailed in VII.\ref{appendixA}. By applying the Chebyshev's inequalities as Lemma [\ref{lemma_chebyshev}] , we have the following result:
\begin{equation}
	\sum_{j=1}^{3}e^{V_{j}^p}V_{j}^{1-p} \ge \frac{1}{3}\sum_{j=1}^{3}e^{V_{j}^p}\sum_{j=1}^{3}V_{j}^{1-p}
\end{equation}
subsequently, by applying Lemma [\ref{lemma_mean}], one can be obtained that:
\begin{equation}
	\frac{1}{3}\sum_{j=1}^{3}e^{V_{j}^p} \ge \left(\prod_{j=1}^{3}e^{V_{j}^p}\right)^{\frac{1}{3}} = e^{\frac{1}{3}\sum_{j=1}^{3}V_{j}}
\end{equation}
simultaneously, applying Lemma [\ref{lemma_beta}] yields
\begin{equation}
	\sum_{j=1}^{3}V^{1-p}_{j} \ge \left(\sum_{j=1}^{3}V_{j}\right)^{1-p} = V^{1-p}
\end{equation}
therefore, we have the following result:
\begin{equation}
	\begin{aligned}
			\label{results}
		-\sum_{j=1}^{3}e^{V_{j}^{p}}V_{j}^{1-p}  &\le -e^{\frac{1}{3}\left(\sum_{j=1}^{3}V_{j}\right)^{p}}\left(\sum_{j=1}^{3}V_{j}\right)^{1-p} \\
		& = -e^{\frac{1}{3}V^p}V^{1-p}
	\end{aligned}
\end{equation}
Substituting (\ref{results}) into (\ref{dV}), since $M_a \ge 0$, we have:
\begin{equation}
	\begin{aligned}
			\label{result}
		&-M_{a}\left[\sum_{j=1}^{3}e^{V_{j}^{p}}V_{j}^{1-p}\right] \\
		\quad &\le -M_{a}e^{\frac{1}{3}\left(\sum_{j=1}^{3}V_{j}\right)^{p}}\left(\sum_{j=1}^{3}V_{j}\right)^{1-p} \\
		& = -M_{a}e^{\frac{1}{3}V^p}V^{1-p}
	\end{aligned}
\end{equation}
Sort out these results, we can yield:
\begin{equation}
	\label{Sortout}
	\dot{V} \le -M_{a}e^{\frac{1}{3}V^{p}}V^{1-p}+C_{0}+B_{m}
\end{equation}
Note that we have select $p_{1} = p_{2} = p_{3} =p$, consider about the time parameter in $M_{a}$, rearrange it as follows:
\begin{equation}
	\begin{aligned}
		\frac{K_{q}}{p_{1}T_{1}} = \frac{1}{p\frac{T_{1}}{K_{q}}};
		\frac{K_{\omega}}{p_{2}T_{2}} = \frac{1}{p\frac{T_{2}}{K_{2}}};
		\frac{1}{pT_{3}}-1 = \frac{1}{p\frac{T_{3}}{1-pT_{3}}}
	\end{aligned}
\end{equation}
Define $T_{s1} = \frac{T_{1}}{K_{q}}$, $T_{s2} = \frac{T_{2}}{K_{2}}$, $T_{s3} = \frac{T_{3}}{1-pT_{3}}$. we can yield that $M_{a} = \frac{1}{p\max\left(T_{s1},T_{s2},T_{s3}\right)}$.
Further, we define $T_{s} = \max\left(T_{s1},T_{s2},T_{s3}\right)$ for brevity. 
\begin{remark}
	Here we discussing about the approximation of  $B_{m}$. According to the expression and the characteristic of $\boldsymbol{B}_{d}$, there exists an upper boundary of $0.5\|\boldsymbol{B}_{d}\|^{2}$. According to the attitude error system, we have the following result:
	\begin{equation}
		\dot{\boldsymbol{\alpha}} = \boldsymbol{J}^{-1}u_{d} + \boldsymbol{W}
	\end{equation} 
	where $\boldsymbol{W}$ is the known bounded dynamical terms, $\boldsymbol{u}_{d}$ denotes the desired control output according to the inverse dynamic equation.
	Therefore, the derivative of the desired attitude error angular velocity can be approximated by $\boldsymbol{u}_{d}$. In this way, the maxima of $\boldsymbol{B}_{d}$ is mainly reached at the beginning of the control process since the system will acting intensely to ensure the convergence of the system at that time. This indicates that we could approximate $B_{m}$ through the initial condition. Thus, we could have an approximate of $C_{0}+B_{m}$, which can be used for the parameter regulating.
\end{remark}

For the conclusion expressed in equation (\ref{Sortout}), by applying the conclusion in theorem \ref{theoremone}, we have
\begin{equation}
	\dot{V} \le -\mu M_{a}e^{\frac{1}{3}V^{p}}V^{1-p}
\end{equation} 
for $\exp\left(\frac{1}{3}V^{p}\right)V^{1-p} \ge \frac{C_{0} + B_{m}}{M_{a}\left(1-\mu\right)}$. For $\mu \in\left(0,1\right)$, the system will be practically predefined-time stabled, with the upper boundary of the settling time expressed as
\begin{equation}
	\label{settletime}
	T_{\text{set}} \le 3\frac{1}{\mu pM_{a}} = 3\frac{1}{\mu}T_{s}
\end{equation}
further, the residual set is expressed as follows:
\begin{equation}
	\left\{\boldsymbol{x}\big|\exp\left(\frac{1}{3}V^{p}\right)V^{1-p} \le
	\frac{(C_{0} + B_{m}) p T_{s}}{1-\mu}\right\}
\end{equation}

Note that for arbitrary $V > 0$, $\dot{V} < 0$ will be strictly satisfied. Thus, consider the compact set $\Omega$, it should be noted that once $V\left(0\right) \le \mathcal{U}$ is satisfied, $V \le \mathcal{U}$ will be satisfied. In this way, the compact set $\mathcal{U}$ is a invariant set of the system, and $\frac{1}{2}\|\boldsymbol{B}_{d}\|^{2} \le B_{m}$ will be always hold. All the closed-loop signals will be ultimately uniformly bounded. 
\begin{remark}
	Here we post some issues about the parameter selecting. Considering about a residual set of the lyapunov function $V$ at the terminal stage denoted as $V_{\infty}$, thus $V \le V_{\infty}$ is expected for $T \ge T_{\text{set}}$. For instance, let $p=0.1$, considering the expression of $V = \frac{1}{2}\left[\|\boldsymbol{\varepsilon_{q}}\|^{2} + \boldsymbol{z}_{2}^{\text{T}}\boldsymbol{J}\boldsymbol{z}_{2} + \|\boldsymbol{H}_{d}\|^{2}\right]$, we can set $V_{\infty}$ by evaluating the terminal desired status of the system. Hence we set $V_{\infty} = \frac{1}{2}\left[1+J_{max}\epsilon+\epsilon\right] \approx \frac{1}{2}$, where $\epsilon$ is a small enough positive constant. Accordingly, the value of $\exp\left(\frac{1}{3}V_{\infty}^{p}\right)V_{\infty}^{1-p}$ is 0.7313. This indicates us that the parameter selecting should guarantee that $\frac{\left(C_{0}+B_{m}\right)p\max\left(T_{s1},T_{s2},T_{s3})\right)}{1-\mu}\le 0.7313$.
	
	 For instance, take $\mu = 0.5$ for the consideration, assume that $C_{0} + B_{m} = 0.2785\cdot1\cdot3 + \frac{1}{J_{\min}} * 2 = 1.3355$, we have $p \max\left(T_{s1},T_{s2},T_{s3}\right) \le 2.7$, and the settling time will be $T_{\text{set}} \le 16.2s$. This result indicate us that the system will converge into the desired residual set $V_{\infty} \le 0.5$ before $T_{\text{set}} = 16.2s$. Note that the smaller the desired residual set is, the smaller the time parameter should be. Besides, according to the suggestion of the parameter selecting in \cite{sanchez_class_2018}, the final $p\max\left(T_{s1}, T_{s2}, T_{s3}\right) \le 0.5$ is recommended.
	
\end{remark}

\section{Numerical Simulation Results and Analysis}
\label{simulation}

In this section, several groups of simulation results are proposed to evaluate the effect of our proposed SAPPC scheme. A virtual attitude tracking task is established, with some performance requirements expected to satisfy. First, a group of attitude tracking control simulation results are shown to validate the applicability of the proposed scheme. Further, a comparison simulation is presented, with two typical PPC scheme-based controllers considered as a benchmark. Besides, a comparison simulation is illustrated to show the ability of the proposed scheme to handle sudden severe disturbances and the singularity problem. Finally, a Monte-Carlo simulation is carried out to show that the proposed scheme is universally applicable to arbitrary initial conditions.

In the simulation, the satellite is assumed to be a rigid body spacecraft, whose inertial matrix is defined as $\boldsymbol{J} = \text{diag}\left(4,4,4\right)\left(kg\cdot m^{2}\right)$. The maximum controller output on each axis is assumed to be 0.5$N\cdot m$, and the minimum controller output on each axis is assumed to be 0.005$N \cdot m$.

\subsection{Simulation of An Assumed Attitude Tracking Task}
\label{Simualtionpartone}
In this section, the spacecraft is expected to track a desired attitude trajectory denoted as $\boldsymbol{q}_d\left(t\right)$. The desired attitude trajectory is defined by the initial attitude quaternion $\boldsymbol{q}_{d}\left(0\right)$ and the desired attitude angular velocity $\boldsymbol{\omega}_{d}\left(t\right)$, expressed as follows:
\begin{equation}
	\begin{aligned}
			\boldsymbol{q}_{d}\left(0\right) &= [0,0,0,1]^{\text{T}}  \\ 
		\boldsymbol{\omega}_{d}\left(t\right) &= 0.573 [\cos\left(t/40\right),\sin\left(t/30\right),-\cos\left(t/50\right)]^{\text{T}}
	\end{aligned}
\end{equation}
To guarantee that the simulation case is convincing, we randomly choose the initial condition of the attitude quaternion. The initial condition of the spacecraft's attitude is randomly selected as follows:
\begin{equation}
	\begin{aligned}
			\boldsymbol{q}_{s}\left(0\right) &= [0.3254,0.4068,-0.3254,0.7891]^{\text{T}} \\ \quad
		\boldsymbol{\omega}_{s}\left(0\right) &= [0,0,0]^{\text{T}}
	\end{aligned}
\end{equation}
Further, the disturbance model is expressed as follows:
\begin{equation}
	\boldsymbol{d} = 
	\begin{bmatrix}
		0.001 \cdot \left[4\sin\left(3\omega_{p}t\right) + 3\cos\left(10\omega_{p}t\right) -20\right]\\ 	
		0.001\left[-1.5\sin\left(2\omega_{p}t\right) + 3\cos\left(5\omega_{p}t\right) +20\right]\\ 	
		0.001\left[3\sin\left(10\omega_{p}t\right) - 8\cos\left(4\omega_{p}t\right) +20\right]\\ 	
	\end{bmatrix}
\end{equation}
where $\omega_{p} = 0.01$ is a parameter used to determine its period.

As aforementioned above, the controller is expected to satisfy some performance requirements, stated as follows:

$\mathbf{1.}$ The attitude control error should converge to satisfy $\max\left(|q_{evi}|\right) < 1e-3$ in no more than $t=20s$. 
$\mathbf{2.}$ Terminal control error of the attitude quaternion should be no more than $0.02^{\circ}$ for the accuracy of the task, which is about to equivalent to $\max\left(|q_{evi}|\right) \le 1.1e-4$.

To satisfy these performance requirements, we first design a RPF for the error state trajectory. The coefficients of the RPF are selected as follows, shown in Table (\ref{tab:RPFpara}).

\begin{table}[hbt!]
	\centering
	\begin{tabular}{|l|l|}
		\hline
		Initial Value of Exponential Function Part $\rho_{e0}$  &  0.4\\ \hline
		Terminal Value of Exponential Function Part $\rho_{e\infty}$ &  1e-6 \\ \hline 
		Coefficient of Exponential Function Part $l$ & 0.5 \\ \hline
		Convergence Time of the RPF   $t_{2}$              & 20 \\ \hline
		Terminal Value of the RPF   $g_{\infty}$         & 3e-5 \\ \hline
	\end{tabular}
	\caption{\label{tab:RPFpara} Coefficients for the designed RPF}
\end{table}

 Further, we set the time parameter as $p = 0.1, T_{1} = T_{2} = 3s, T_{3}=2s$. This will ensure that deviation between the RPF and the state trajectory will converge before the settling time instant $t = 20s$.
\begin{figure}[hbt!]
	\centering 
	\includegraphics[scale = 0.68]{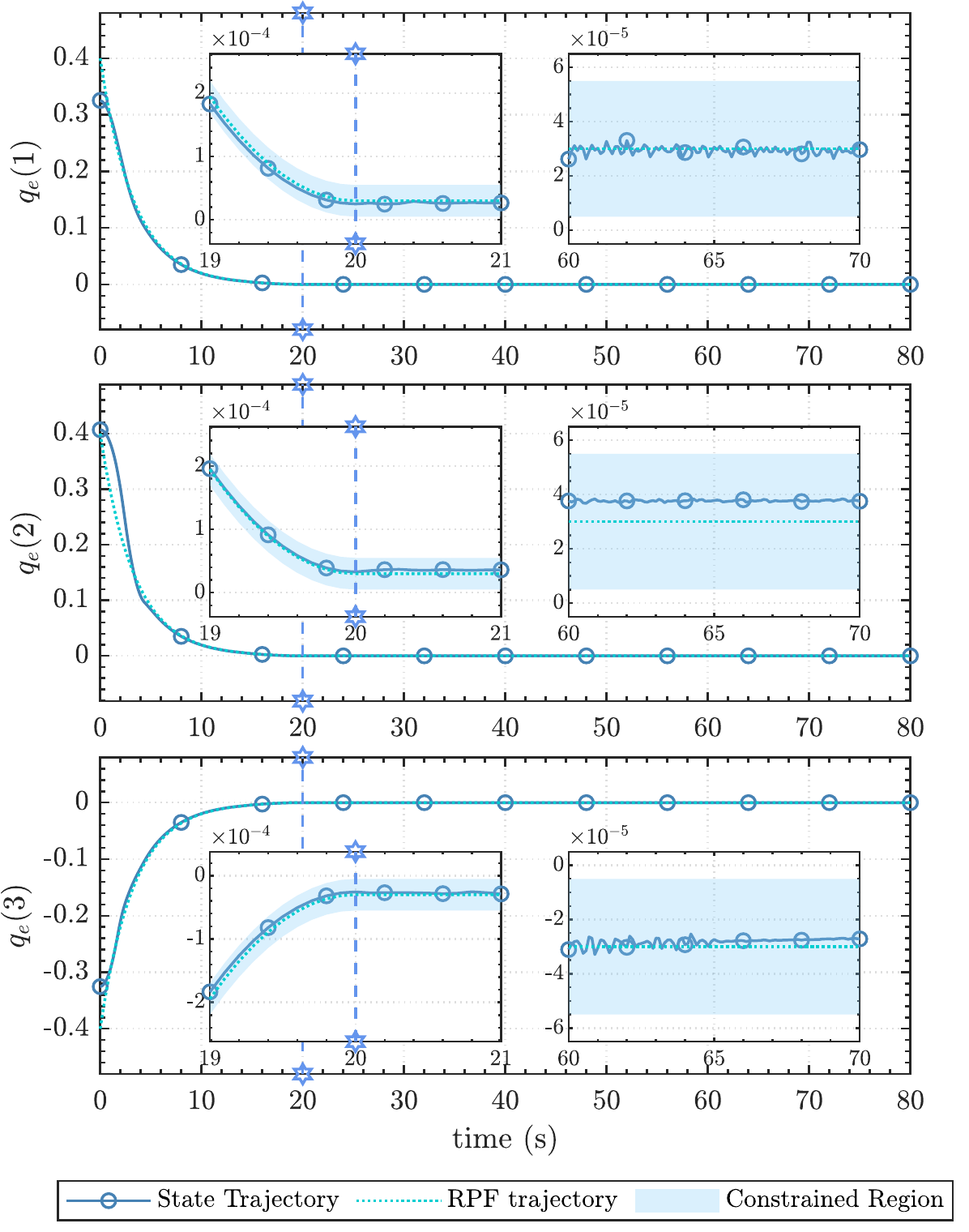}
	\caption{Time Responding of $\boldsymbol{q}_{evi}\left(i=1,2,3\right)$ (Normal Case)}    
	\label{fig_qe}  
\end{figure}

The simulation results are presented in Figure [\ref{fig_qe}][\ref{fig_Uoutput}]. Figure [\ref{fig_qe}] shows the evolution of the attitude error quaternion $\boldsymbol{q}_{ev}$ during the whole process, with each component illustrated respectively. Figure [\ref{fig_Uoutput}] shows the evolution of the actuator output of each axis. In Figure [\ref{fig_qe}], the solid line denotes the actual error state trajectory, while the dotted line denotes the RPF curve. The blue-filled region is the constraint region in which the state trajectory is expected to stay, and the blue star marker denotes the preassigned system settling time $t = 20s$.  We can find the state trajectory $q_{evi}$ rapidly converges to the RPF and finally reaches the steady-state under the guidance of the RPF, without any overshoot occurring.
Specifically, the convergence time of the error state is exactly $t = 20s$, and the terminal control error is bounded by  $\max\left(|q_{evi}|\right) \le 4e-5$.The deviation between the RPF and the state trajectory converged into a small set before $T_{s} = 15s$, which indicates that the practically predefined-time stable is achieved. These simulation results indicate that all the required performance requirements could be achieved.
\begin{figure}[hbt]
	\centering  
	\includegraphics[scale = 0.68]{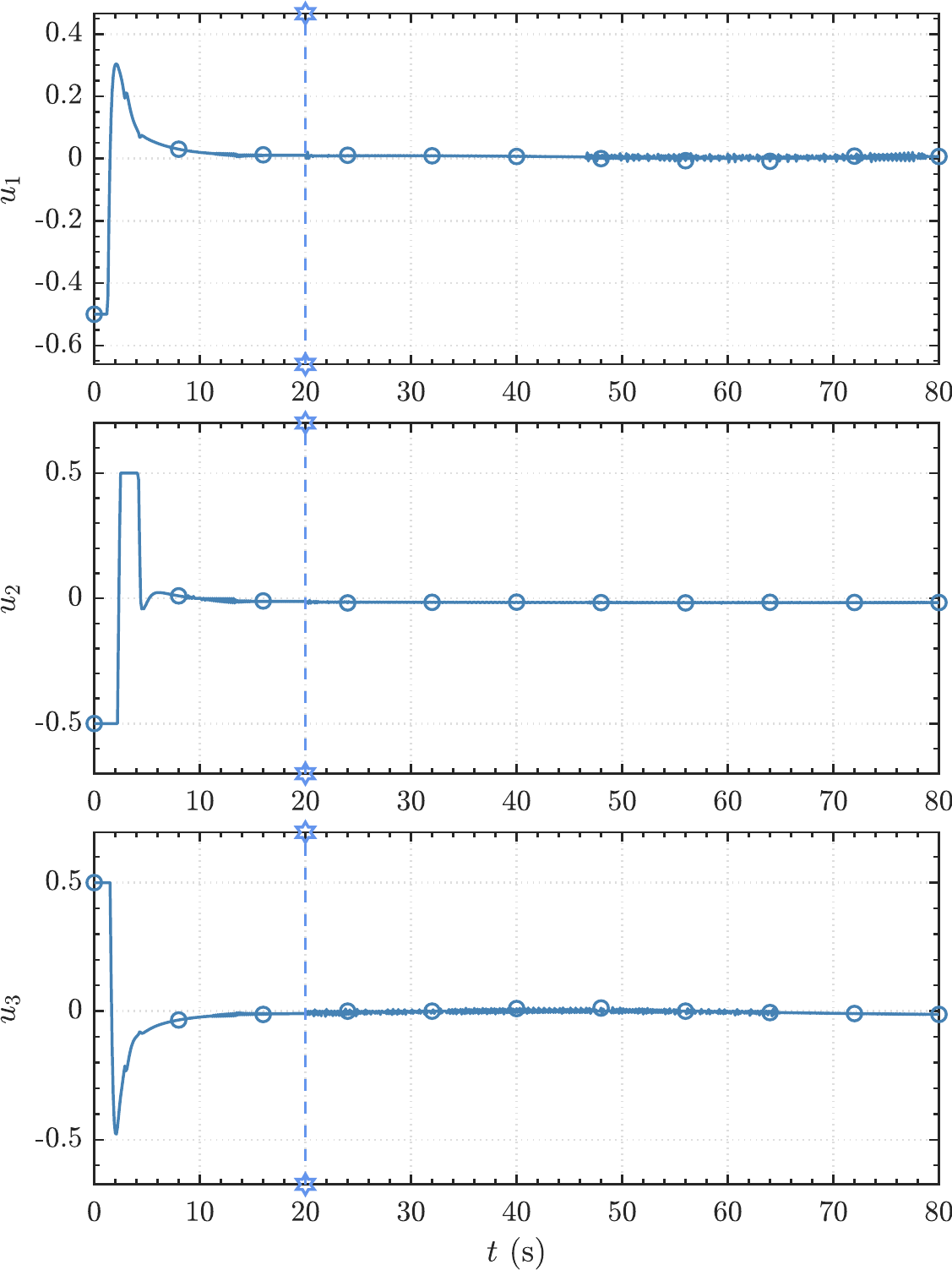}  \caption{Time Responding of actuator output $u_{i}\left(i=1,2,3\right)$ (Normal Case)} 
	\label{fig_Uoutput} 
\end{figure}

\subsection{Comparison Simulation of Benchmark Controller and SAPPC controller }
In this subsection, another two kinds of PPC structure are considered to refer to as a comparison. The PPC scheme of these two benchmark controllers is proposed in the literature \cite{wei_2021_overview} and \cite{hu_adaptive_2018}, respectively. We build two benchmark controllers correspondingly based on these two kinds of PPC schemes. The derived control law of these benchmark controllers are detailed in the Appendix VII.\ref{TraPPC} and VII.\ref{BLFPPC}.

In this simulation, the attitude tracking control task, spacecraft physical parameters and the external disturbance model are the same as the one used in subsection V.\ref{Simualtionpartone}. The initial condition of the spacecraft is randomly selected again, expressed as follows:
\begin{equation}
	\begin{aligned}
			\boldsymbol{q}_{s}\left(0\right) &= [0.2,-0.15,-0.25,0.9354]^{\text{T}} \\ \quad
		\boldsymbol{\omega}_{s}\left(0\right) &= [0,0,0]^{\text{T}}
	\end{aligned}
\end{equation}

We name the benchmark controller $1$ and the benchmark controller $2$ as the "TraPPC" and "BLFPPC" correspondingly. Figure [\ref{fig_qe_compare}] shows the $q_{evi}\left(t\right)$ trajectory of each controller respectively, while Figure [\ref{fig_u_compare}] shows the actuator output of each controller separately.

\begin{figure}[hbt!]
	\centering  
	\includegraphics[scale = 0.5]{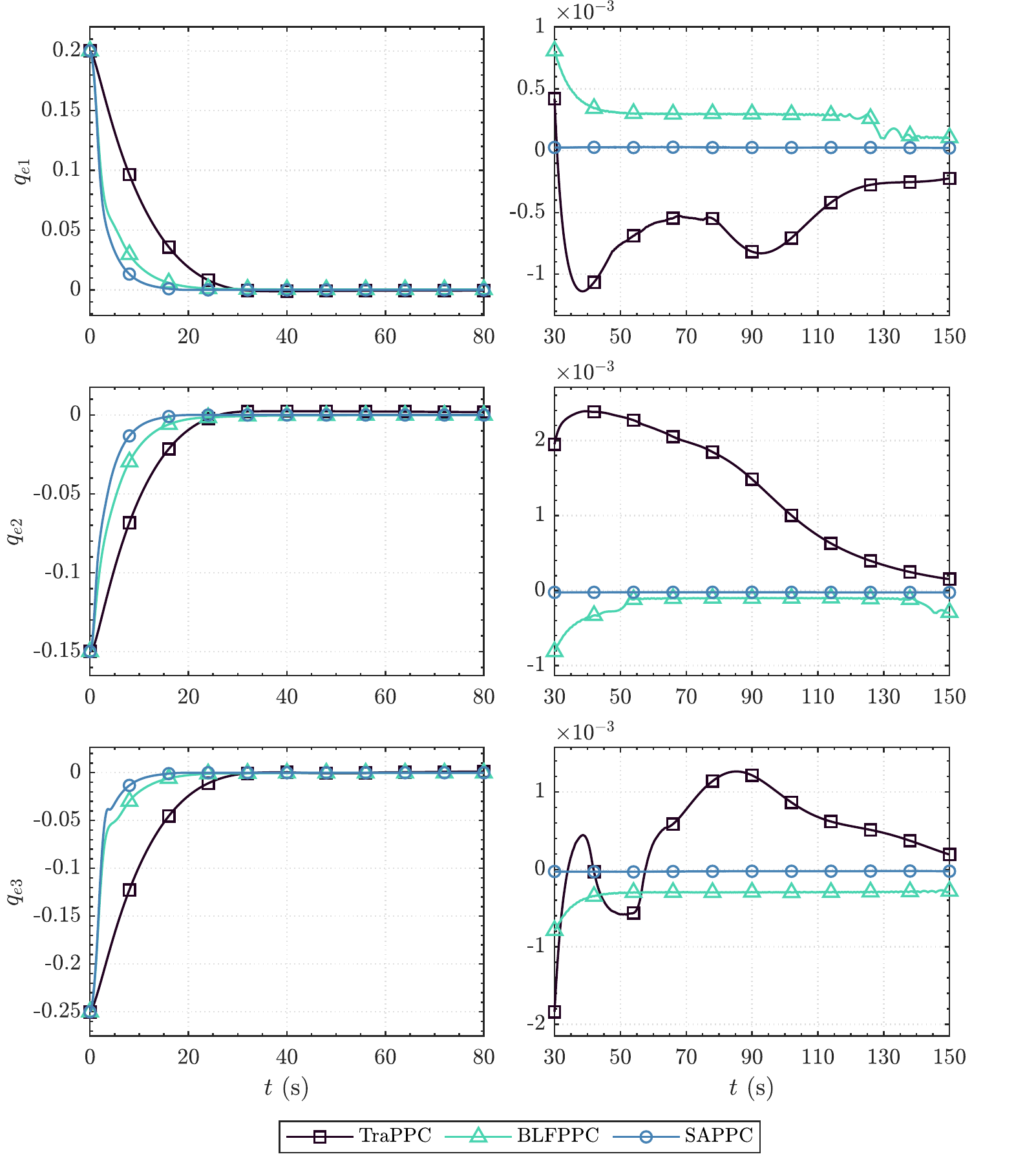}  \caption{Time Responding of attitude error quaternion $q_{evi}\left(i=1,2,3\right)$(Comparison)} 
	\label{fig_qe_compare} 
\end{figure}

\begin{figure}[hbt!]
	\centering  
	\includegraphics[scale = 0.6]{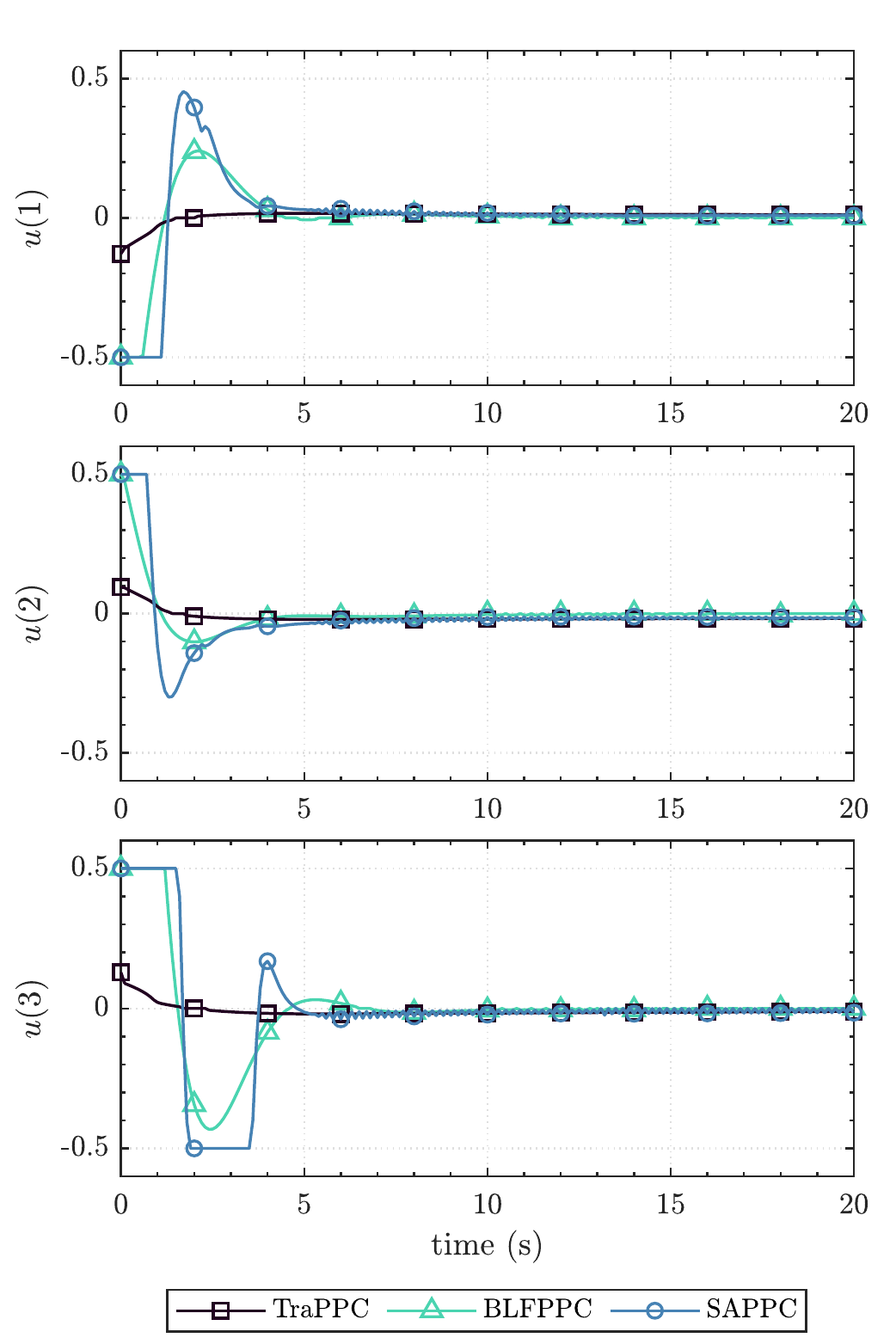}  
	\caption{Time Responding of actuator output $u_{i}\left(i=1,2,3\right)\left(Comparison\right)$} 
	\label{fig_u_compare} 
\end{figure}  
We can find that our proposed SAPPC scheme achieves the highest control accuracy among these controllers under the same conditions, as the control error is bounded to $\max\left(|q_{evi}\left(t\right)| \right)< 4.5e-5$. Also, we can observe that there's overshoot occurs in the TraPPC controller, while there's no overshoot occurs in the proposed scheme and the BLFPPC controller. 
\begin{remark}
	     By choosing a smaller terminal value for the performance function, the traditional PPC scheme may able to achieve a higher accuracy. However, this will bring the risk suffering from the chattering problem. As we test many times, the current result shown in the figure is probably the best result we can ever achieve.
\end{remark}

\subsection{Validation of the Singularity Avoidance} 
\label{compsimulation}
This section concentrates on the issue whether the SAPPC scheme is able to avoid the singularity or not. In this section, the state trajectory will be pulled out of the constraint region by a sudden disturbance. If the singularity-avoidance is realized, the state trajectory will be able to converge back to the constraint region. Besides, the BLFPPC controller is selected to be the reference controller in this subsection.  

The disturbance model is consisted by two kinds of disturbances. For $t<50s$, the disturbance model is the same as the one used in subsection V.\ref{Simualtionpartone}. At $t=50s$, an additional sudden severe disturbance modeled as $\boldsymbol{d}_{a} = [1,1,1]^{\text{T}}\left(N\cdot m\right)$ will be exerted to the system, last for $0.5s$. 

\begin{figure}[hbt!]
	\centering  
	\includegraphics[scale = 0.55]{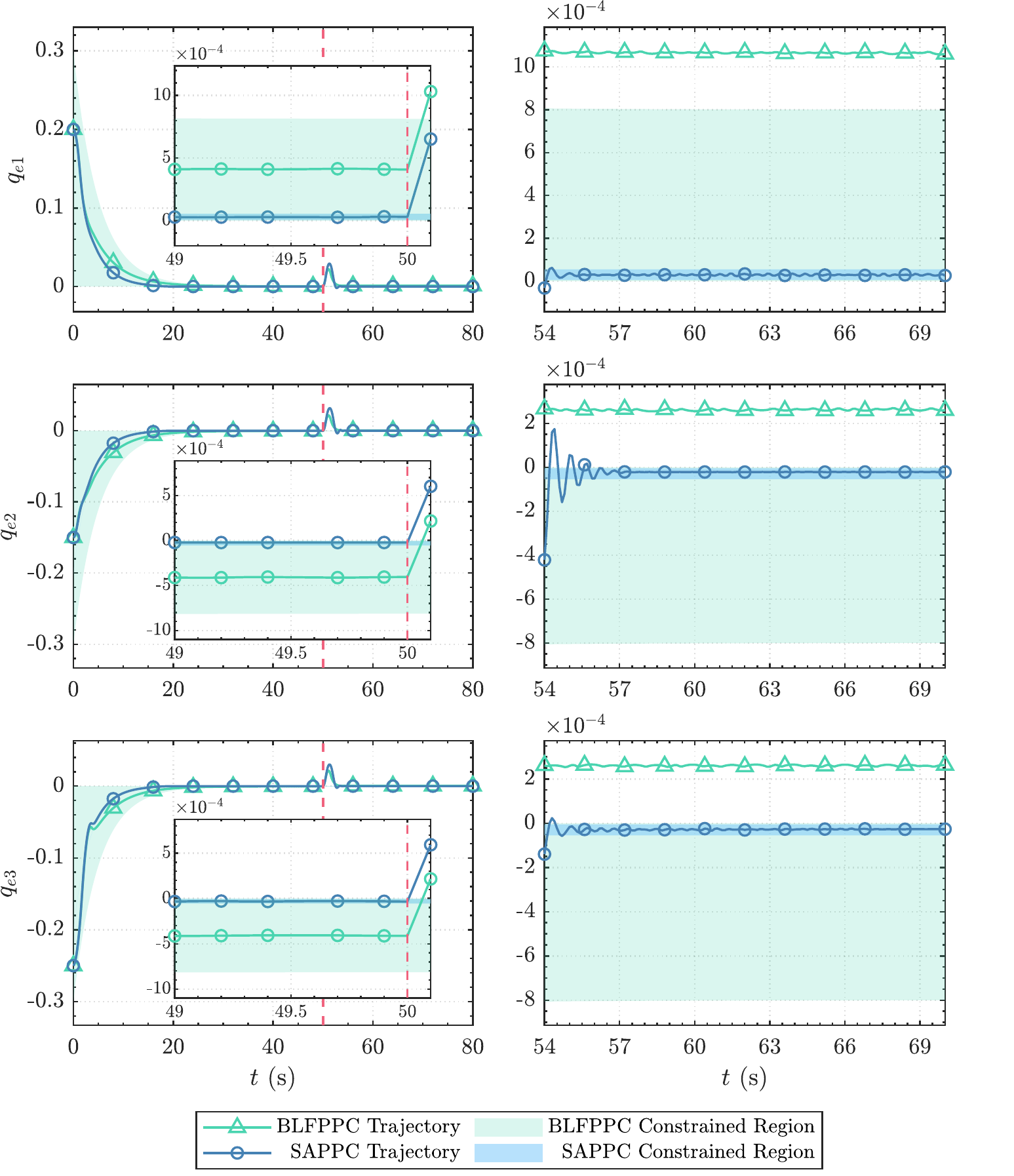}  \caption{Time responding of $q_
		{evi}$ under sudden disturbance} 
	\label{fig_sudden} 
\end{figure}

The simulation results of the $q_{evi}$ trajectory and the controller output are illustrated in Figure[\ref{fig_sudden}]. The state trajectory and the constraint region of SAPPC is illustrated in blue, while the BLFPPC controller is illustrated in green.

we can observe that when the sudden disturbance is exerted to the system at $t=50s$, both the state trajectory of BLFPPC and SAPPC are pulled out of the constraint region by the disturbance. However, the state trajectory of BLFPPC benchmark controller can no longer converge back to the green constraint region. As for the proposed SAPPC scheme, the state trajectory is still able to converge back to the constraint region even when the system suffered from sudden disturbance. This indicates that the singularity problem is solved in the SAPPC scheme.

\begin{remark}
	We can observe that the "BLFPPC" benchmark controller will not diverge totally after the disturbance, but the system is not able to converge back to the constraint region. Here we give an explanation for this phenomenon.
	
	When the state trajectory is pulled out of the constraint region by the external disturbance at $t=50s$, note that the third term expressed as $- 2\boldsymbol{K}_{3}\boldsymbol{\Gamma}^{-1}\boldsymbol{D}_{\rho}\boldsymbol{\varepsilon}_{q}$ will exert the wrong direction repulsive force due to the singularity problem, as stated in Section IV.\ref{Comparison}.
	Considering about the control law of the BLFPPC benchmark controller expressed in section VII.\ref{BLFPPC}, the proportion term $-K_{2}\boldsymbol{M}_{2}\boldsymbol{J}\boldsymbol{z}_{2}$ is the main term that guarantees the convergence of $\boldsymbol{z}_{2}$. Since the state trajectory is far away from the steady state after the sudden disturbance, the proportion term plays the major role in the control output short after the disturbance. However, when the system has converged a lot, the proportion term will not play a major role at that time. Therefore, the "convergence" part of the controller $-K_{2}\boldsymbol{M}_{2}\boldsymbol{J}\boldsymbol{z}_{2}$ and the "wrong direction repulsive force" part of the controller $- 2\boldsymbol{K}_{3}\boldsymbol{\Gamma}^{-1}\boldsymbol{D}_{\rho}\boldsymbol{\varepsilon}_{q}$ will effect against each other, and the error state variable will finally be "stuck" to a terminal state, which is not in the constraint region. Our explanation to this phenomenon is validated by a test: by changing the value of the controller gain $K_{\omega}$, we are able to change the terminal value of state variable. Further, the bigger the controller gain is, the smaller the terminal steady-state error is.
\end{remark}

The simulation result indicates that our proposed SAPPC scheme is able to handle the severe external disturbance properly. Even when the state trajectory is pulled out of the constraint region by the external disturbances, the controller is still able to lead the error state back to the convergence.

\subsection{Monte Carlo Simulation}
To evaluate whether the proposed SAPPC scheme is able to handle different conditions, a group of Monte-Carlo simulation is carried out to analyze the sensitivity of the proposed method. In the Monte-Carlo simulation, the initial three-axis euler angle of the spacecraft's attitude is considered to be a randomly chosen value in a given range as $\left[-85,85\right]^{\circ}$, which is able to cover most of the possible on-orbit attitude maneuver scenarios. In the Monte-Carlo simulation test, the simulation duration is $50s$ and 3000 times simulations are executed in the Monte Carlo simulation experiments.

For this Monte Carlo simulation campaign, the reference trajectory function (RPF) is designed as Table [\ref{tab:RPFpara2}].
\begin{table}[hbt!]
	\centering
	\begin{tabular}{|l|l|}
		\hline
		Initial Value of Exponential Function Part $\rho_{e0}$  &  $q_{evi}\left(0\right)$\\ \hline
		Terminal Value of Exponential Function Part $\rho_{e\infty}$ &  1e-6 \\ \hline 
		Coefficient of Exponential Function Part $l$ & 0.2 \\ \hline
		Convergence Time of the RPF   $t_{2}$              & 20 \\ \hline
		Terminal Value of the RPF   $g_{\infty}$         & 3e-5 \\ \hline
	\end{tabular}
	\caption{\label{tab:RPFpara2} Coefficients for the designed RPF in Monte Carlo Simulation test}
\end{table}
Note that the initial value of the designed RPF is decided by the initial condition of the state variable $q_{evi}\left(t\right)$. The system model, the attitude tracking task along with the external disturbance model are the same as those in subsection V.\ref{Simualtionpartone}.

\begin{figure}[hbt!]
	\centering  
	\includegraphics[scale = 0.5]{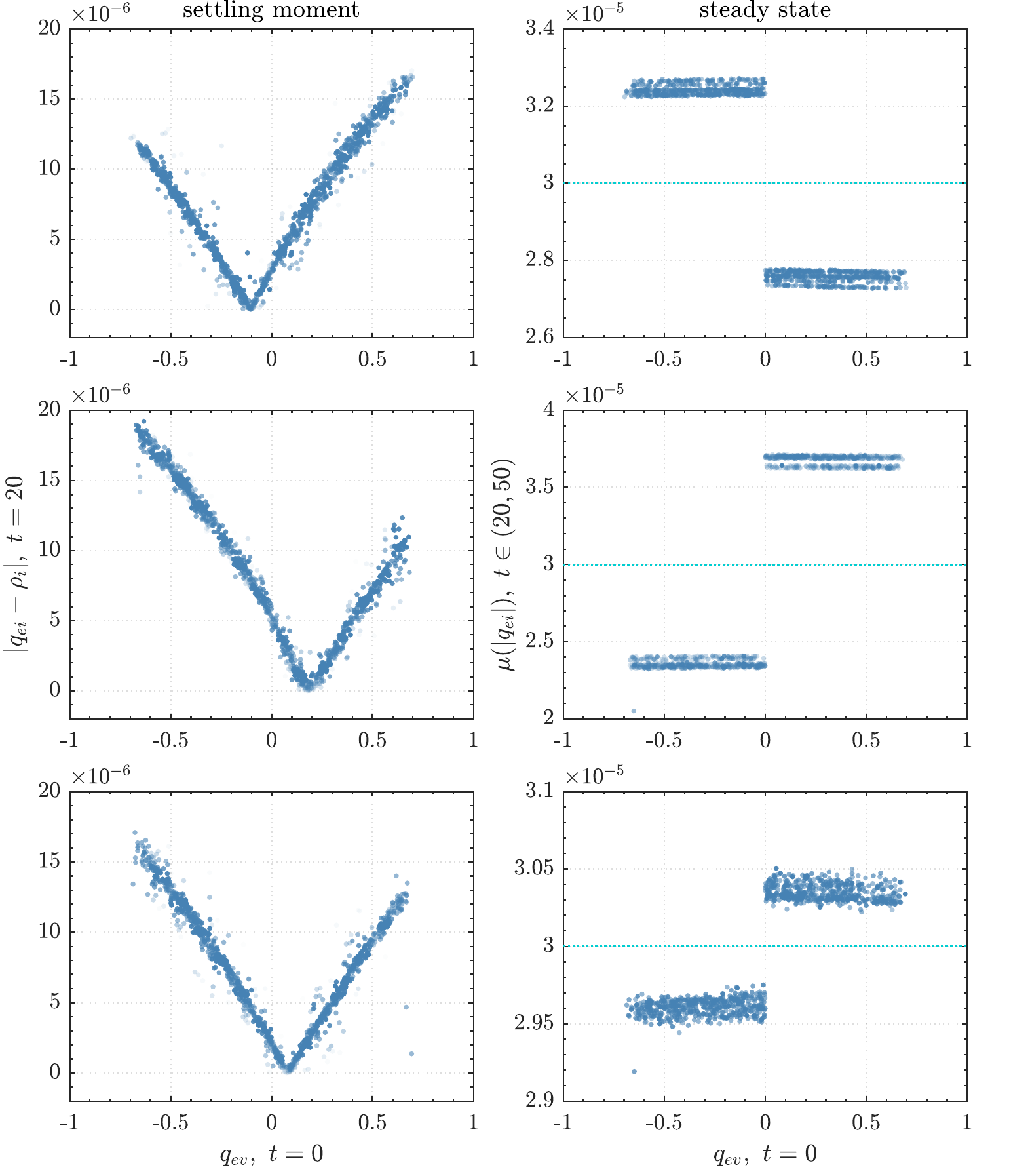}  
	\caption{Left: RPF Tracking Error At $t=20s$ $\quad$ Right: Steady-State Control Error}
	\label{fig_MC} 
\end{figure}

The simulation results of the Monte Carlo simulation are shown in Figure[\ref{fig_MC}][\ref{fig_MCST}][\ref{fig_MCST2}] as below, respectively. The left subplot Figure [\ref{fig_MC}] shows the deviation between the state trajectory $q_{evi}\left(t\right)$ and the RPF at $t=20s$, with each component is illustrated separately. The right subplot of the Figure [\ref{fig_MC}] shows the mean value of the state trajectory in $t\in\left[25,50\right]$. Figure [\ref{fig_MCST}][\ref{fig_MCST2}] shows the 3000-times state trajectory simulation result of the whole Monte Carlo simulation.

\begin{figure}[hbt!]
	\centering  
	\includegraphics[scale = 0.25]{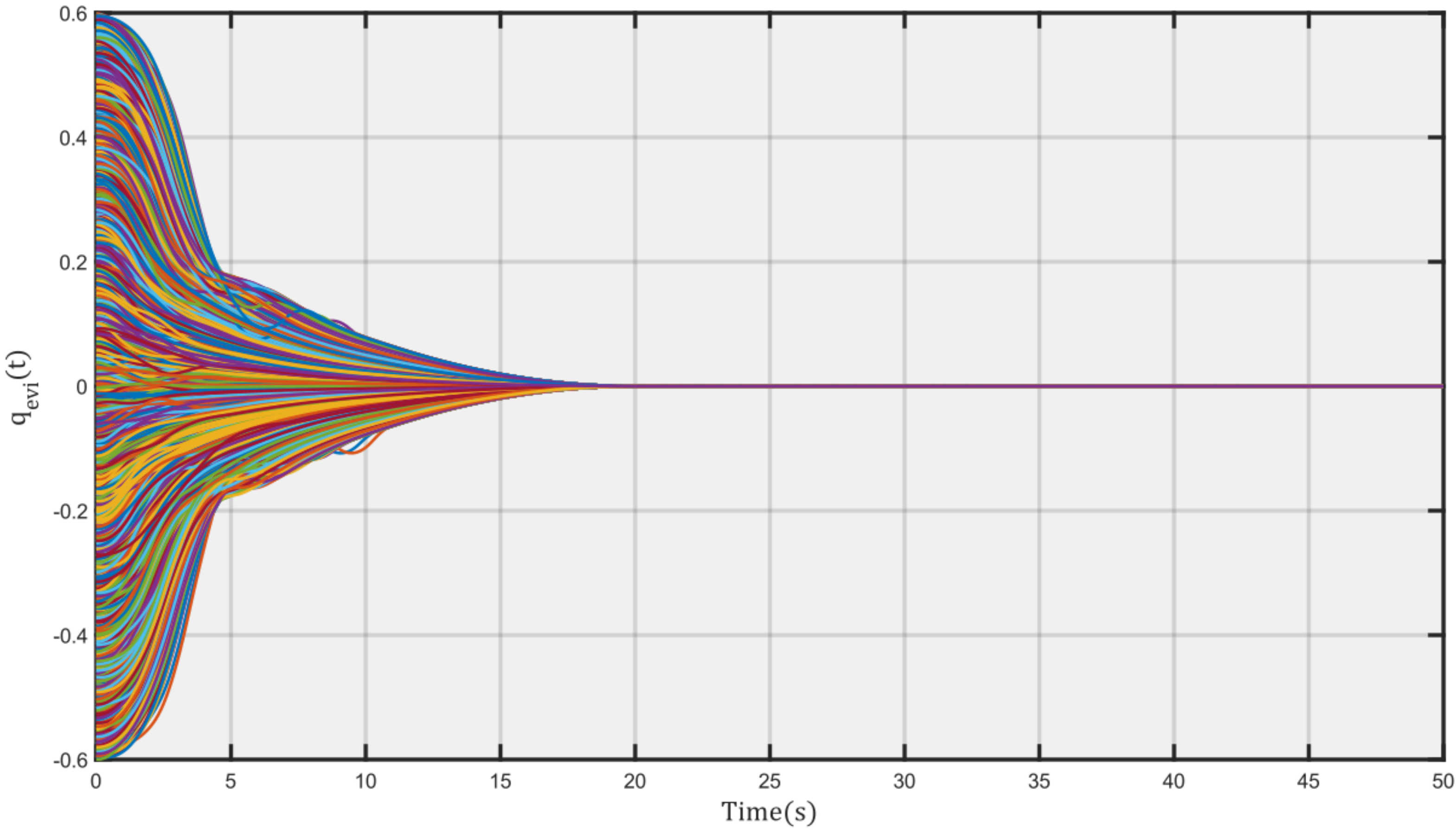}  
	\caption{Monte Carlo Simulation State Trajectory with $t_{2} = 20s$(3000 times Result)}
	\label{fig_MCST} 
\end{figure}

\begin{figure}[hbt!]
	\centering  
	\includegraphics[scale = 0.25]{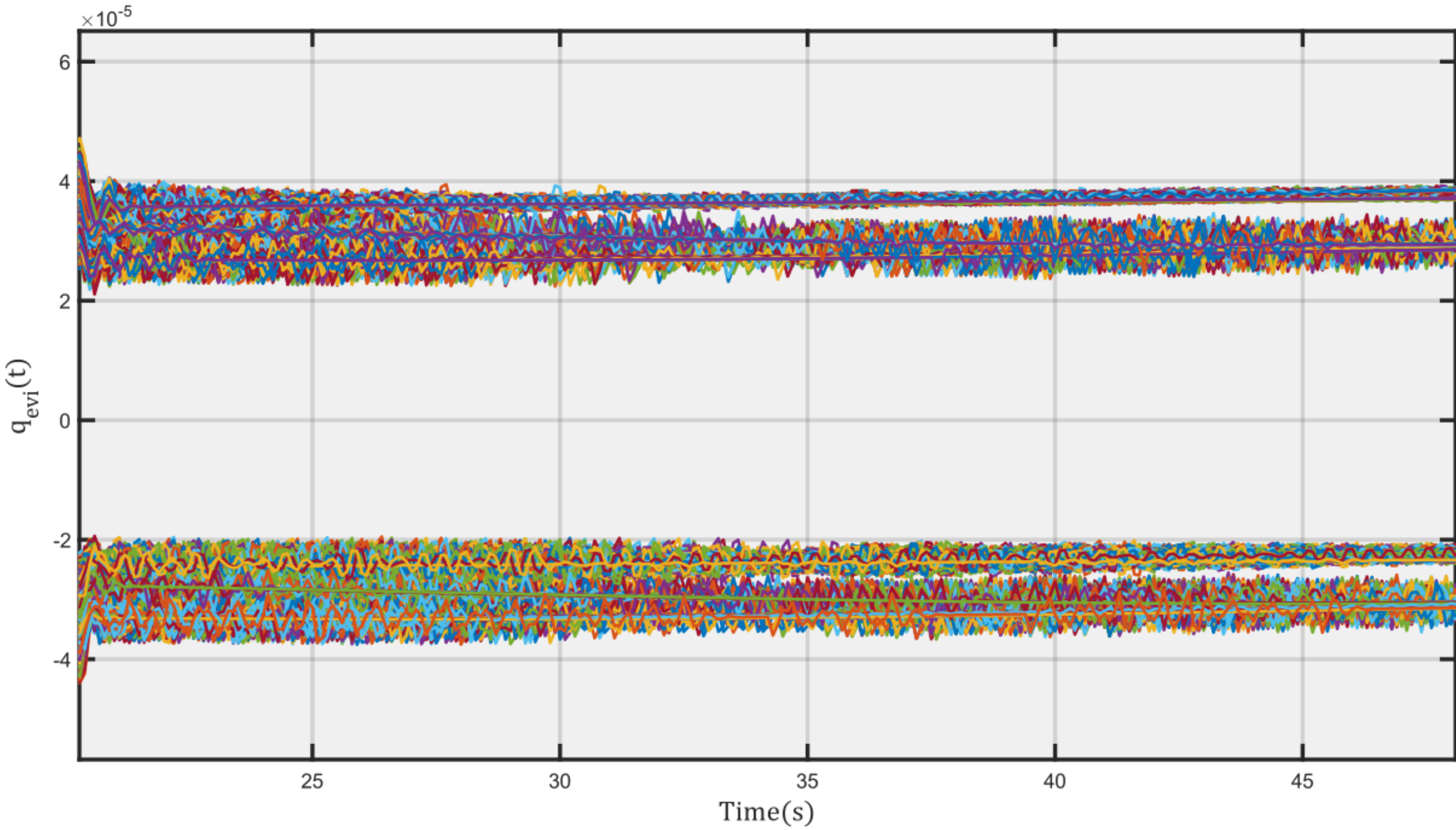}  
	\caption{Monte Carlo Simulation State Trajectory(Steady State) (3000 times Result)}
	\label{fig_MCST2} 
\end{figure}

We can observe that $\max|q_{evi} - \rho_{i}|\left(i=1,2,3\right)\left(t=20s\right)$ is smaller than 1e-4, which indicates that system has tracked the steady-state tightly at the settling moment. Further, it can be observe that for all the simulation cases, the steady-state control error can be limited to $5e-5$. From the Figure [\ref{fig_MCST}] we can observe that all the state trajectory is converged to the steady state at $t=20s$. For all the cases, it can be observed that all the state trajectory is converge to the neighborhood of the RPF at around $3e-5$, as shown in Figure [\ref{fig_MCST2}].

\section{Conclusion}
This paper has proposed a novel singularity-avoidance prescribed performance control scheme (SAPPC) to solve the attitude tracking control problem with preassigned controller performance requirements. To tackle the inherent singularity problem in the traditional PPC scheme, we first introduce the shear space mapping transformation to the error transformation procedure, providing a novel global non-singular error transformation procedure for the PPC scheme.
In order to alleviate the over-control problem of the traditional PPC schemes, we employ the time-varying state constraint boundary to provide appropriate constraint strengths at different control stages. Further, we design a smooth reference trajectory for the state responding to help achieve accurate transient performance requirements. 
The numerical simulation result indicates that these singularity problems can achieve the preassigned performance requirement even when there exists significant external disturbance, and the terminal steady-state control error is improved due to the improvement of the contradiction between the control effect and constraint's strength. Further, accurate control of the system responding is realized as the state trajectory could track the RPF tightly, and the state trajectory is able to converge to the steady state under the guidance of the given RPF.

Since the proposed SAPPC scheme is able to be theoretically combined with other control schemes, its application in other scenarios is worth further investigation. The proposed SAPPC scheme is able to design the system's state trajectory precisely, so there is potential value in its application in contemporary space missions, especially for collaborative attitude control problem.

\section{Appendix}
In this section, some significant inequalities and results are elaborated for the proof of the proposed controller, and the specific structure of the benchmark controller is also elaborated.
\subsection{Appendix A.}
\label{appendixA}
Consider the function as $f\left(x\right) = \exp\left(x^p\right)\quad p\in\left(0,1\right)$, take the time-derivative of $f\left(x\right)$, we can yield that:
\begin{equation}
	\dot{f}\left(x\right) = e^{x^{p}}px^{p-1} \ge 0
\end{equation}
It is obvious that for $p\in\left(0,1\right)$ and $x\in\left(0,+\infty\right)$, the function $f\left(x\right)$ is a monotonically increasing one.
Similarly, consider a function as $h\left(x\right) = x^{1-p} \quad p\in\left(0,1\right)$, take the time-derivative of $h\left(x\right)$, we have:
\begin{equation}
	\dot{h}\left(x\right) = \left(1-p\right)x^{-p} \ge 0
\end{equation}
Thus, the function $h\left(x\right)$ is a monotonically increasing function.
\subsection{Appendix B.  The Control Law of the Benchmark Controller 1 $\left(TraPPC\right)$}
\label{TraPPC}
For the benchmark controller $1$ named as "TraPPC", its PPC scheme has been elaborated in the existing literature \cite{wei_2021_overview}. The performance function of the TraPPC controller is the mostly applied exponential function type
Specially, $K=0.3$ is applied in this paper.
 Since the TraPPC benchmark controller has the same structure similar to the proposed SAPPC controller, the virtual control law, actual control law and the dynamic surface filter is the same as the proposed SAPPC controller.

\subsection{Appendix C.  The Control Law of the Benchmark Controller 2 $\left(BLFPPC\right)$}
\label{BLFPPC}
For the benchmark controller $2$ named as "BLFPPC" controller, its PPC scheme has been elaborated in the existing literature \cite{hu_adaptive_2018}. To realize the finite-time convergence, its performance function is replaced by the FTPPF applied in the literature as \cite{gao_finite-time_2021}. The expression of the employed FTPPF is expressed as follows:
\begin{equation}
	\begin{aligned}
		\begin{split}	\rho\left(t\right) = 	
			\left \{	
			\begin{array}{ll}
				\left(\rho^{m}_{0} - m\lambda\cdot t\right)^{1/m} + \rho_{T_{f}} & t\in\left(0,T_{f}\right) \\
				\rho_{T_{f}}   & t\in\left[T_{f},+\infty\right)
			\end{array}				
			\right.	
		\end{split}
	\end{aligned}
\end{equation}
where $\rho_{0}$, $m$, $\lambda$, $\rho_{T_{f}}$ are the coefficients that needs designing. $\rho_{T_{f}}$ denotes that terminal value of the performance function, $\rho_{0}$ is decided by the initial value of the FTPPF as $\rho_{0} = \rho\left(0\right) - \rho_{T_{f}}$, $m\cdot\lambda = \frac{\rho^{m}_{0}}{T_{f}}$. $m$ is a positive value that satisfies $m = a_{1}/a_{2} \in\left(0,1\right)$, where $a_{1}$, $a_{2}$ are a positive odd integer and a positive even integer respectively, $T_{f}$ is the expected settling time. This FTPPF satisfies $\lim_{t\to T_{f}}\rho\left(t\right) = \rho_{T_{f}}$, also, differentiable on $t\in\left(0,+\infty\right)$.
Note that the performance function of this scheme can be indicated individually as $\rho_{li}$, $\rho_{ui}$. Thus, we set the introduced FTPPF as $\rho_{ui}$ and let $\rho_{li} = 0$ to avoid overshooting.
The control system is designed as follows:
\begin{equation}
	\begin{aligned}
		\boldsymbol{\alpha} &= \boldsymbol{\Gamma}^{-1}
		\left[-\boldsymbol{K}_{1}\boldsymbol{q}_{ev} + \frac{1}{2}\boldsymbol{K}_{1}\boldsymbol{P}_{lu} + \boldsymbol{S}_{v}\right] \\
		\boldsymbol{u} &= -\boldsymbol{K}_{2}\boldsymbol{J}\boldsymbol{z}_{2} - 2\boldsymbol{K}_{3}\boldsymbol{\Gamma}^{-1}\boldsymbol{D}_{\rho}\boldsymbol{\varepsilon}_{q} -\boldsymbol{W}_{0} + \boldsymbol{J}\dot{\boldsymbol{S}_{d}} \\
		&\quad - 
		D_{m}\text{vec}\left(\tanh\left(\frac{z_{2i}}{\mu_{i}}\right)\right)  \\
	\end{aligned}
\end{equation}
where $\boldsymbol{K}_{1}\in\mathbb{R}^{3\times 3}$, $\boldsymbol{K}_{2}\in\mathbb{R}^{3\times 3}$ and $\boldsymbol{K}_{3}\in\mathbb{R}^{3\times 3}$ represents the diagonal matrix whose diagonal line is consisted by the controller parameter $K_{1}$,$K_{2}$,$K_{3}$ respectively.
$\boldsymbol{P}_{lu}$, $\boldsymbol{D}_{\rho}$ and $\boldsymbol{S}_{v}$ are defined as follows:
\begin{equation}
	\begin{aligned}
		\boldsymbol{P}_{lu} &= \boldsymbol{\rho}_{l} + \boldsymbol{\rho}_{u} \\ 
		\boldsymbol{S}_{b} &= \text{vec}\left[\dot{\rho}_{li}\rho_{ui} - \dot{\rho}_{ui}\rho_{li}\right]\left(i=1,2,3\right) \\
		\boldsymbol{S}_{v} &= \text{diag}(1/(\rho_{ui} - \rho_{li}))\cdot
		\left[\text{diag}\left(\dot{\rho}_{ui} - \dot{\rho}_{li}\right)\boldsymbol{q}_{ev} + \boldsymbol{S}_{b}\right]\\
		\boldsymbol{D}_{\rho} &=\text{diag}\left(\frac{1}{\left(1-\varepsilon_{qi}^{2}\right)\left(\rho_{ui} - \rho_{li}\right)}\right)
	\end{aligned}
\end{equation}
It should be noticed that $-\boldsymbol{W}_{0}$, $\boldsymbol{J}\dot{\boldsymbol{S}}_{d}$ and $D_{m}$ are known.

\bibliographystyle{IEEEtran}
\bibliography{IEEEabrv,ast_reference_test}


\begin{IEEEbiography}
	[{\includegraphics[width=1in,height=1.25in,clip,keepaspectratio]{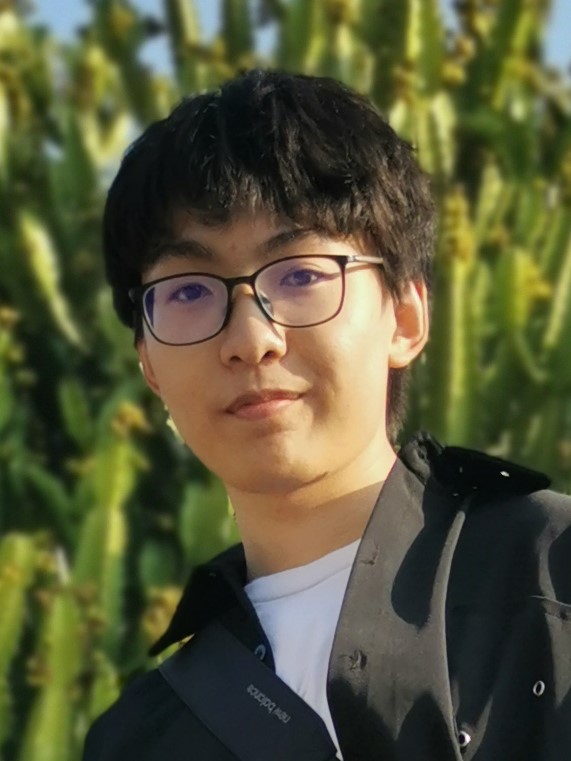}}]
	{Jiakun Lei}{\space} 
received the B.S. degree from Dep.Automation, University of Electronic Science and Technology of China, Chengdu, China, in 2019. He is currently working toward the Ph.D. degree in aeronautical and astronautical science and technology in Zhejiang University, Hangzhou, China.
His research interests include attitude planning and control under constraints, attitude control of spacecraft with complex structure and attitude formation.
\end{IEEEbiography}

\begin{IEEEbiography}
	[{\includegraphics[width=1in,height=1.25in,clip,keepaspectratio]{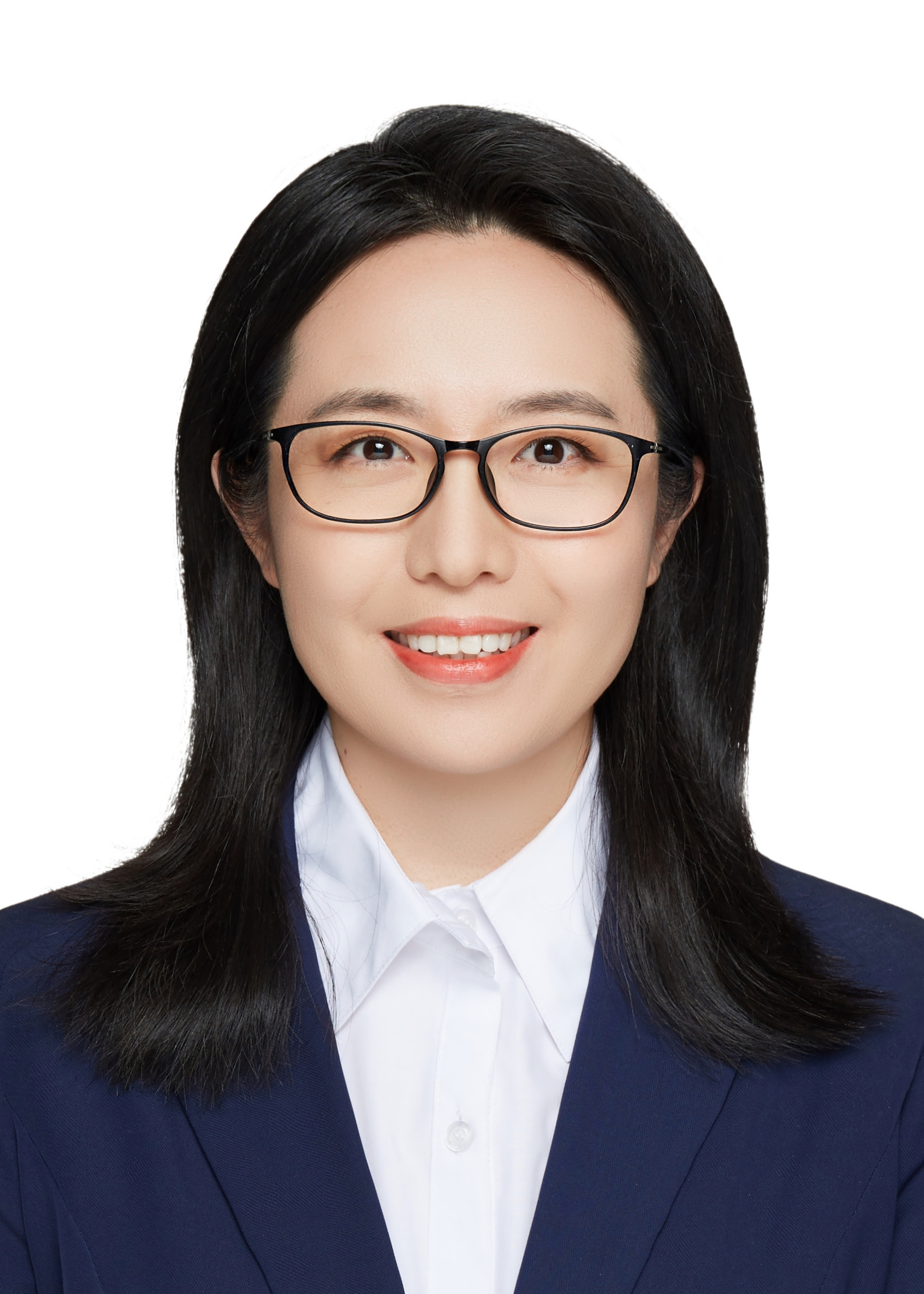}}]
	{Tao Meng}{\space}
	 received the B.S. degree from Electronic science and technology, Zhejiang university, Hangzhou, China, in 2004, the M.S. degree from Electronic science and technology, Zhejiang university, Hangzhou, China, in 2006, and the Ph.D. degree from Electronic science and technology, Zhejiang university, Hangzhou, China, in 2009. She is currently the Professor of the School of Aeronautics and astronautics. Her research interest include attitude control, orbit control and constellation formation control of micro-satellite.
\end{IEEEbiography}

\begin{IEEEbiography}
	[{\includegraphics[width=1in,height=1.25in,clip,keepaspectratio]{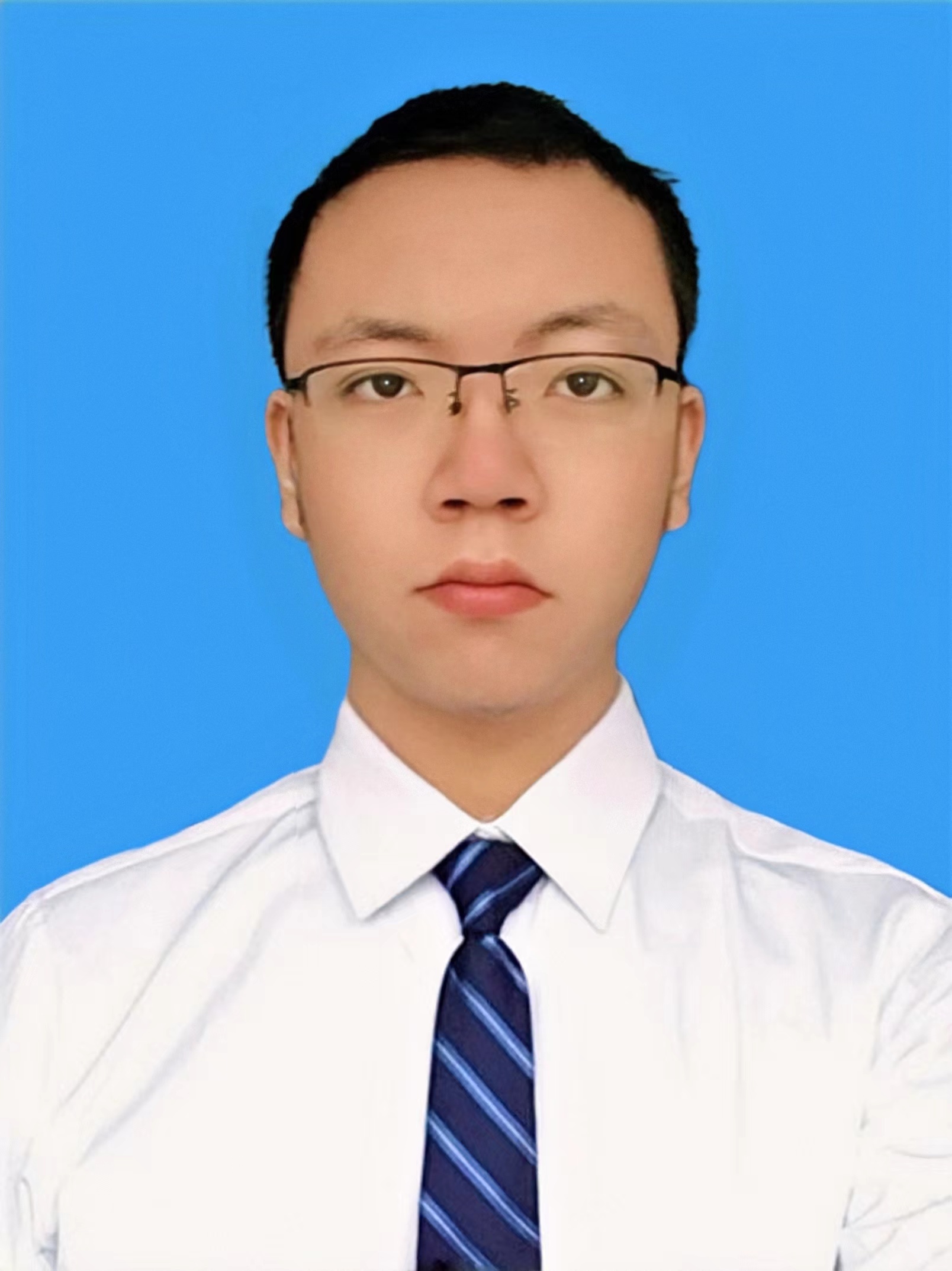}}]
	{Weijia Wang}{\space}
    received the B.S. degree from aerospace engineering, University of Electronic Science and Technology of China, Chengdu, China, in 2020. He is currently working toward the Ph.D. degree in aeronautical and astronautical science and technology in Zhejiang University, Hangzhou, China.	His research interests include model predictive control and reinforcement-learning-based control for 6-DOF spacecraft formation. 
\end{IEEEbiography}

\begin{IEEEbiography}
	[{\includegraphics[width=1in,height=1.25in,clip,keepaspectratio]{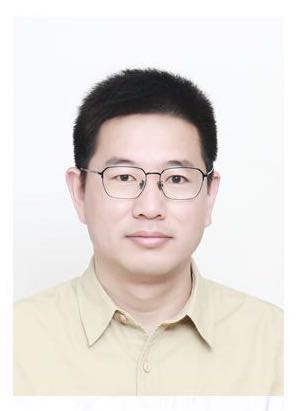}}]
	{Heng Li}{\space}
	received the B.S. degree from Mechanical Engineering, Hebei University of Technology, Tianjin, China, in 2010, and the M.S. degree from agricultural robotics technology, China Agricultural University, Beijing, China, in 2013. He is currently an engineer at Micro-Satellite Research Center of Zhejiang University, Hangzhou, China. His main research interest are software performance optimization of attitude and orbit control system.
\end{IEEEbiography}

\begin{IEEEbiography}
	[{\includegraphics[width=1in,height=1.25in,clip,keepaspectratio]{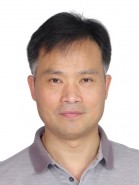}}]
	{Zhonghe Jin}{\space}
    received the Ph.D. degree from major of microelectronics and solid electronics, Zhejiang University, Hangzhou, China, in 1998. He is now the deputy director of Zhejiang University and the director of the center for micro satellite research. His main research areas include micro-satellites, MEMS etc.
	
\end{IEEEbiography}

\end{document}